\renewcommand\footnotetextcopyrightpermission[1]{} 
\renewcommand\@formatdoi[1]{\ignorespaces}
\newtheorem*{theorem*}{Theorem}
\newtheorem{theorem}{Theorem}[section]
\newtheorem{lemma}[theorem]{Lemma}
\newtheorem{definition}{Definition}[section]
\DeclareMathOperator*{\argmax}{arg\,max}
\newcommand{\floor}[1]{\lfloor #1 \rfloor}
\newcommand{\cP}{\mathcal{P}}
\newcommand{\cH}{{\mathcal{H}}}
\newcommand{\dt}{\mathrm{d}t}
\newcommand{\dx}{\mathrm{d}x}
\newcommand{\ddp}{\frac{\mathrm{d}}{\mathrm{d}p}}
\def\b1{{\bf 1}}
\DeclareMathOperator\sech{sech}
\title{An Elo-like System for Massive Multiplayer Competitions}
\author{Aram Ebtekar}
\affiliation{%
  \city{Vancouver}
  \state{BC}
  \country{Canada}
}
\email{aramebtech@gmail.com}
\author{Paul Liu}
\affiliation{%
  \institution{Stanford University}
  \city{Stanford}
  \state{CA}
  \country{USA}
}
\email{paul.liu@stanford.edu}
\date{}
\begin{document}

\begin{abstract}
Rating systems play an important role in competitive sports and games. They provide a measure of player skill, which incentivizes competitive performances and enables balanced match-ups. In this paper, we present a novel Bayesian rating system for contests with many participants. It is widely applicable to competition formats with discrete ranked matches, such as online programming competitions, obstacle courses races, and some video games. The simplicity of our system allows us to prove theoretical bounds on robustness and runtime. In addition, we show that the system \emph{aligns incentives}: that is, a player who seeks to maximize their rating will never want to underperform. Experimentally, the rating system rivals or surpasses existing systems in prediction accuracy, and computes faster than existing systems by up to an order of magnitude.
\looseness=-1
\end{abstract}

\maketitle
\pagestyle{plain}

\section{Introduction}

Competitions, in the form of sports, games, and examinations, have been with us since antiquity. Many competitions grade performances along a numerical scale, such as a score on a test or a completion time in a race. In the case of a college admissions exam or a track race, scores are standardized so that a given score on two different occasions carries the same meaning. However, in events that feature novelty, subjectivity, or close interaction, standardization is difficult. The Spartan Races, completed by millions of runners, feature a variety of obstacles placed on hiking trails around the world~\cite{Spartan}. Rock climbing, a sport to be added to the 2020 Olympics, likewise has routes set specifically for each competition. DanceSport, gymnastics, and figure skating competitions have a panel of judges who rank contestants against one another; these subjective scores are known to be noisy~\cite{DanceSport}. Most board games feature considerable inter-player interaction.
In all these cases, scores can only be used to compare and rank participants at the same event. Players, spectators, and contest organizers who are interested in comparing players' skill levels across different competitions will need to aggregate the entire history of such rankings. A strong player, then, is one who consistently wins against weaker players. To quantify skill, we need a \emph{rating system}.
\looseness=-1

Good rating systems are difficult to create, as they must balance several mutually constraining objectives. First and foremost, the rating system must be accurate, in that ratings provide useful predictors of contest outcomes. Second, the ratings must be efficient to compute: in video game applications, rating systems are predominantly used for matchmaking in massively multiplayer online games (such as Halo, CounterStrike, League of Legends, etc.)~\cite{HMG06, MCZ18, Y14}. These games have hundreds of millions of players playing tens of millions of games per day, necessitating certain latency and memory requirements for the rating system~\cite{AL09}. Third, the rating system must align incentives. That is, players should not modify their performance to ``game'' the rating system. Rating systems that can be gamed often create disastrous consequences to player-base, more often than not leading to the loss of players from the game~\cite{pokemongo}. Finally, the ratings provided by the system must be human-interpretable: ratings are typically represented to players as a single number encapsulating their overall skill, and many players want to understand and predict how their performances affect their rating~\cite{G95}.


Classically, rating systems were designed for two-player games. The famous Elo system~\cite{E61}, as well as its Bayesian successors Glicko and Glicko-2, have been widely applied to games such as Chess and Go~\cite{G95, G99, G12}. Both Glicko versions model each player's skill as a real random variable that evolves with time according to Brownian motion. Inference is done by entering these variables into the Bradley-Terry model, which predicts probabilities of game outcomes. Glicko-2 refines the Glicko system by adding a rating volatility parameter. Unfortunately, Glicko-2 is known to be flawed in practice, potentially incentivising players to lose. This was most notably exploited in the popular game of Pokemon Go~\cite{pokemongo}; see \Cref{sec:mono} for a discussion of this issue.

The family of Elo-like methods just described only utilize the binary outcome of a match. In settings where a scoring system provides a more fine-grained measure of match performance, Kovalchik~\cite{K20} has shown variants of Elo that are able to take advantage of score information. For competitions consisting of several set tasks, such as academic olympiads, Fori{\v{s}}ek~\cite{forivsektheoretical} developed a model in which each task gives a different ``response'' to the player: the total response then predicts match outcomes. However, such systems are often highly application-dependent and hard to calibrate.
\looseness=-1

Though Elo-like systems are widely used in two-player contests, one needn't look far to find competitions that involve much more than two players. Aside from the aforementioned sporting examples, there are video games such as CounterStrike and Halo, as well as academic olympiads: notably, programming contest platforms such as Codeforces, TopCoder, and Kaggle~\cite{Codeforces, TopCoder, Kaggle}. In these applications, the number of contestants can easily reach into the thousands. Some more recent works present interesting methods to deal with competitions between two teams \cite{HLW06, CJ16, LCFHH18, GFYLWTFC20}, but they do not present efficient extensions for settings in which players are sorted into more than two, let alone thousands, of distinct places.

In a many-player ranked competition, it is important to note that the pairwise match outcomes are not independent, as they would be in a series of 1v1 matches. Thus, TrueSkill~\cite{HMG06} and its variants~\cite{NS10, DHMG07, MCZ18} model a player's performance during each contest as a single random variable. The overall rankings are assumed to reveal the total order among these hidden performance variables, with various strategies used to model ties and teams. These TrueSkill algorithms are efficient in practice, successfully rating userbases that number well into the millions (the Halo series, for example, has over 60 million sales since 2001~\cite{Halo}).

The main disadvantage of TrueSkill is its complexity: originally developed by Microsoft for the popular Halo video game, TrueSkill performs approximate belief propagation on a factor graph, which is iterated until convergence~\cite{HMG06}. Aside from being less human-interpretable, this complexity means that, to our knowledge, there are no proofs of key properties such as runtime and incentive alignment. Even when these properties are discussed~\cite{MCZ18}, no rigorous justification is provided. In addition, we are not aware of any work that extends TrueSkill to non-Gaussian performance models, which might be desirable to limit the influence of outlier performances (see \Cref{sec:robust}).

It might be for these reasons that platforms such as Codeforces and TopCoder opted for their own custom rating systems. These systems are not published in academia and do not come with Bayesian justifications. However, they retain the formulaic simplicity of Elo and Glicko, extending them to settings with much more than two players. The Codeforces system includes ad hoc heuristics to distinguish top players, while curbing rampant inflation. TopCoder's formulas are more principled from a statistical perspective; however, it has a volatility parameter similar to Glicko-2, and hence suffers from similar exploits~\cite{forivsektheoretical}. Despite their flaws, these systems have been in place for over a decade, and have more recently gained adoption by additional platforms such as CodeChef and LeetCode~\cite{LeetCode, CodeChef}.

\paragraph{Our contributions} 
In this paper, we describe the Elo-MMR rating system, obtained by a principled approximation of a Bayesian model very similar to TrueSkill. It is fast, embarrassingly parallel, and makes accurate predictions. Most interesting of all, its simplicity allows us to rigorously analyze its properties: the ``MMR'' in the name stands for ``Massive'', ``Monotonic'', and ``Robust''. ``Massive'' means that it supports any number of players with a runtime that scales linearly; ``monotonic'' means that it \emph{aligns incentives} so that a rating-maximizing player always wants to perform well; ``robust'' means that rating changes are bounded, with the bound being smaller for more consistent players than for volatile players. Robustness turns out to be a natural byproduct of accurately modeling performances with heavy-tailed distributions, such as the logistic. TrueSkill is believed to satisfy the first two properties, albeit without proof, but fails robustness. Codeforces only satisfies aligned incentives, and TopCoder only satisfies robustness.

Experimentally, we show that Elo-MMR achieves state-of-the-art performance in terms of both prediction accuracy and runtime. In particular, we process the entire Codeforces database of over 300K rated users and 1000 contests in well under a minute, beating the existing Codeforces system by an order of magnitude while improving upon its accuracy. A difficulty we faced was the scarcity of efficient open-source rating system implementations. In an effort to aid researchers and practitioners alike, we provide open-source implementations of all rating systems, datasets, and additional processing used in our experiments at \url{https://github.com/EbTech/EloR/}.

\paragraph{Organization}
In \Cref{sec:bayes_model}, we formalize the details of our Bayesian model. We then show how to estimate player skill under this model in \Cref{sec:main-alg}, and develop some intuitions of the resulting formulas. As a further refinement, \Cref{sec:skill-drift} models skill evolutions from players training or atrophying between competitions. This modeling is quite tricky as we choose to retain players' momentum while ensuring it cannot be exploited for incentive-misaligned rating gains. \Cref{sec:properties} proves that the system as a whole satisfies several salient properties, the most critical of which is aligned incentives. Finally, we present experimental evaluations in \Cref{sec:experiments}.

\section{A Bayesian Model for Massive Competitions}
\label{sec:bayes_model}

We now describe the setting formally, denoting random variables by capital letters. A series of competitive \textbf{rounds}, indexed by $t=1,2,3,\ldots$, take place sequentially in time. Each round has a set of participating \textbf{players} $\cP_t$, which may in general overlap between rounds. A player's \textbf{skill} is likely to change with time, so we represent the skill of player $i$ at time $t$ by a real random variable $S_{i,t}$.

In round $t$, each player $i\in \cP_t$ competes at some \textbf{performance} level $P_{i,t}$, typically close to their current skill $S_{i,t}$. The deviations $\{P_{i,t}-S_{i,t}\}_{i\in\cP_t}$ are assumed to be i.i.d. and independent of $\{S_{i,t}\}_{i\in\cP_t}$.

Performances are not observed directly; instead, a ranking gives the relative order among all performances $\{P_{i,t}\}_{i\in\cP_t}$. In particular, ties are modelled to occur when performances are exactly equal, a zero-probability event when their distributions are continuous.\footnote{
The relevant limiting procedure is to treat performances within $\epsilon$-width buckets as ties, and letting $\epsilon\rightarrow 0$. This technicality appears in the proof of \Cref{thm:uniq-max}.} This ranking constitutes the observational \textbf{evidence} $E_t$ for our Bayesian updates. The rating system seeks to estimate the skill $S_{i,t}$ of every player at the present time $t$, given the historical round rankings $E_{\le t} := \{ E_1,\ldots,E_t \}$.

We overload the notation $\Pr$ for both probabilities and probability densities: the latter interpretation applies to zero-probability events, such as in $\Pr(S_{i,t} = s)$. We also use colons as shorthand for collections of variables differing only in a subscript: for instance, $P_{:,t}:=\{P_{i,t}\}_{i\in\cP_t}$. The joint distribution described by our Bayesian model factorizes as follows:
\begin{align}
    &\Pr(S_{:,:}, P_{:,:}, E_:) \label{eq:model}
    \\&= \prod_i \Pr(S_{i,0})
    \prod_{i,t} \Pr(S_{i,t}\mid S_{i,t-1})
    \prod_{i,t} \Pr(P_{i,t}\mid S_{i,t})
    \prod_t \Pr(E_t\mid P_{:,t}), \nonumber
\end{align}
\vspace{-1.5em}
\begin{align*}
    \text{where } \Pr(S_{i,0}) &\text{ is the initial skill prior,}
    \\\Pr(S_{i,t}\mid S_{i,t-1}) &\text{ is the skill evolution model (\Cref{sec:skill-drift}),}
    \\\Pr(P_{i,t}\mid S_{i,t}) &\text{ is the performance model, and}
    \\\Pr(E_t\mid P_{:,t}) &\text{ is the evidence model.}
\end{align*}
For the first three factors, we will specify log-concave distributions (see \Cref{def:log-concave}). The evidence model, on the other hand, is a deterministic indicator. It equals one when $E_t$ is consistent with the relative ordering among $\{P_{i,t}\}_{i\in\cP_t}$, and zero otherwise.

Finally, our model assumes that the number of participants $|\cP_t|$ is large. 
The main idea behind our algorithm is that, in sufficiently massive competitions, from the evidence $E_t$ we can infer very precise estimates for $\{P_{i,t}\}_{i\in\cP_t}$. Hence, we can treat these performances as if they were observed directly.

That is, suppose we have the skill prior at round $t$:
\begin{equation}
\label{eq:pi-s}
\pi_{i,t}(s) := \Pr(S_{i,t} = s \mid P_{i,<t}).
\end{equation}

Now, we observe $E_t$. By \Cref{eq:model}, it is conditionally independent of $S_{i,t}$, given $P_{i,\le t}$. By the law of total probability,
\begin{align*}
&\Pr(S_{i,t}=s \mid P_{i,<t},\,E_t)
\\&= \int \Pr(S_{i,t}=s \mid P_{i,<t},\,P_{i,t}=p) \Pr(P_{i,t}=p \mid P_{i,<t},\,E_t) \, \mathrm{d}p
\\&\rightarrow \Pr(S_{i,t}=s \mid P_{i,\le t}) \quad\text{almost surely as }|\mathcal P_t|\rightarrow\infty.
\end{align*}
The integral is intractable in general, since the performance posterior $\Pr(P_{i,t}=p \mid P_{i,<t},\,E_t)$ depends not only on player $i$, but also on our belief regarding the skills of all $j\in\cP_t$. However, in the limit of infinite participants, Doob's consistency theorem \cite{F63} implies that it concentrates at the true value $P_{i,t}$. Since our posteriors are continuous, the convergence holds for all $s$ simultaneously.

Indeed, we don't even need the full evidence $E_t$. Let $E^L_{i,t} = \{j\in\cP:P_{j,t}>P_{i,t}\}$ be the set of players against whom $i$ lost, and $E^W_{i,t} = \{j\in\cP:P_{j,t}<P_{i,t}\}$ be the set of players against whom $i$ won. That is, we only see who wins, draws, and loses against $i$. $P_{i,t}$ remains identifiable using only $(E^L_{i,t}, E^W_{i,t})$, which will be more convenient for our purposes.

Passing to the limit $|\cP_t|\rightarrow\infty$ serves to justify some common simplifications made by algorithms such as TrueSkill: since conditioning on $P_{i,\le t}$ makes the skills of different players independent of one another, it becomes accurate to model them as such. In addition to simplifying derivations, this fact ensures that a player's posterior is unaffected by rounds in which they are not a participant, arguably a desirable property in its own right. Furthermore, $P_{i,\le t}$ being a sufficient statistic for skill prediction renders any additional information, such as domain-specific raw scores, redundant.

Finally, a word on the rate of convergence. Suppose we want our estimate to be within $\epsilon$ of $P_{i,t}$, with probability at least $1-\delta$. By asymptotic normality of the posterior~\cite{F63}, it suffices to have $O(\frac 1{\epsilon^2}\sqrt{\log \frac 1\delta})$ participants.

When the initial prior, performance model, and evolution model are all Gaussian, treating $P_{i,t}$ as certain is the \emph{only} simplifying approximation we will make; that is, in the limit $|\cP_t|\rightarrow\infty$, our method performs \emph{exact} inference on \Cref{eq:model}. In the following sections, we focus some attention on generalizing the performance model to non-Gaussian log-concave families, parametrized by location and scale. We will use the logistic distribution as a running example and see that it induces robustness; however, our framework is agnostic to the specific distributions used.

The \textbf{rating} $\mu_{i,t}$ of player $i$ after round $t$ should be a statistic that summarizes their posterior distribution: we'll use the maximum a posteriori (MAP) estimate, obtained by setting $s$ to maximize the posterior $\Pr(S_{i,t}=s \mid P_{i,\le t})$. By Bayes' rule,
\begin{equation}
\label{eq:new-obj}
\mu_{i,t} := \argmax_{s} \pi_{i,t}(s) \Pr(P_{i,t} \mid S_{i,t}=s).
\end{equation}
This objective suggests a two-phase algorithm to update each player $i\in\cP_t$ at round $t$. In phase one, we estimate $P_{i,t}$ from $(E^L_{i,t}, E^W_{i,t})$. By Doob's consistency theorem, our estimate is extremely precise when $|\cP_t|$ is large, so we assume it to be exact. In phase two, we update our posterior for $S_{i,t}$ and the rating $\mu_{i,t}$ according to \Cref{eq:new-obj}.

We will occasionally make use of the \textbf{prior rating}, defined as
\begin{equation*}
\mu_{i,t}^\pi := \argmax_{s} \pi_{i,t}(s).
\end{equation*}
\section{A two-phase algorithm for skill estimation}
\label{sec:main-alg}
    \subsection{Performance estimation}
\label{sec:performance}

In this section, we describe the first phase of Elo-MMR. For notational convenience, we assume all probability expressions to be conditioned on the \textbf{prior context} $P_{i,< t}$, and omit the subscript $t$.

Our prior belief on each player's skill $S_i$ implies a prior distribution on $P_i$. Let's denote its probability density function (pdf) by
\looseness=-1
\begin{equation}
\label{eq:perf-prior} 
f_i(p) := \Pr(P_i = p) = \int \pi_i(s) \Pr(P_i = p \mid S_i=s) \,\mathrm{d}s,
\end{equation}
where $\pi_i(s)$ was defined in \Cref{eq:pi-s}. Let
\[F_i(p) := \Pr(P_i\le p) = \int_{-\infty}^p f_i(x) \,\dx,\]
be the corresponding cumulative distribution function (cdf). For the purpose of analysis, we'll also define the following ``loss'', ``draw'', and ``victory'' functions:
\begin{align*}
l_i(p) &:= \ddp\ln(1-F_i(p)) = \frac{-f_i(p)}{1 - F_i(p)},
\\d_i(p) &:= \ddp\ln f_i(p) = \frac{f'_i(p)}{f_i(p)},
\\v_i(p) &:= \ddp\ln F_i(p) = \frac{f_i(p)}{F_i(p)}.
\end{align*}

Evidently, $l_i(p) < 0 < v_i(p)$. Now we define what it means for the deviation $P_i - S_i$ to be log-concave.
\begin{definition}
\label{def:log-concave}
An absolutely continuous random variable on a convex domain is \textbf{log-concave} if its probability density function $f$ is positive on its domain and satisfies
\[f(\theta x + (1-\theta) y) > f(x)^\theta f(y)^{1-\theta},\;\forall\theta\in(0,1),x\neq y.\]
\end{definition}

We note that log-concave distributions appear widely, and include the Gaussian and logistic distributions used in Glicko, TrueSkill, and many others. We'll see inductively that our prior $\pi_i$ is log-concave at every round. Since log-concave densities are closed under convolution~\cite{concave}, the independent sum $P_i=S_i+(P_i-S_i)$ is also log-concave. The following lemma (proved in the appendix) makes log-concavity very convenient:
\begin{lemma}
\label{lem:decrease}
If $f_i$ is continuously differentiable and log-concave, then the functions $l_i,d_i,v_i$ are continuous, strictly decreasing, and
\[l_i(p) < d_i(p) < v_i(p) \text{ for all }p.\]
\end{lemma}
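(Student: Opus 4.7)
The plan is to first establish continuity and strict monotonicity of $d_i$ directly from the hypothesis, then derive the key chain of strict inequalities $l_i(p) < d_i(p) < v_i(p)$ via a single monotonicity-integration argument applied to each tail, and finally reuse those inequalities to deduce strict monotonicity of $l_i$ and $v_i$.

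Continuity of all three functions is immediate from $f_i, f'_i \in C^0$ together with $F_i(p), 1-F_i(p) > 0$ on the interior of the domain (which follows because $f_i > 0$ on a convex domain). Strict monotonicity of $d_i = (\ln f_i)'$ is equivalent to strict concavity of $\ln f_i$, which is exactly the log-concavity hypothesis.

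For the central inequality $d_i(p) < v_i(p)$, I would write $f_i(p) = \int_{-\infty}^p f'_i(x)\,\dx = \int_{-\infty}^p d_i(x)\, f_i(x)\,\dx$; since $d_i$ is strictly decreasing, $d_i(x) > d_i(p)$ for all $x < p$, so the integral strictly exceeds $d_i(p)\, F_i(p)$, which gives $v_i(p) = f_i(p)/F_i(p) > d_i(p)$. The symmetric identity $f_i(p) = -\int_p^\infty f'_i(x)\,\dx$, valid because a log-concave density decays at least exponentially and so $f_i \to 0$ at $+\infty$, yields $-l_i(p) > -d_i(p)$, i.e.\ $l_i(p) < d_i(p)$. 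With these inequalities in hand, strict monotonicity of $v_i$ and $l_i$ reduces to a short derivative computation: $v'_i(p) = (f'_i(p) F_i(p) - f_i(p)^2)/F_i(p)^2$ has the same sign as $d_i(p) - v_i(p) < 0$, and analogously $l'_i(p)$ has the same sign as $l_i(p) - d_i(p) < 0$.

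The main obstacle I anticipate is handling boundary cases cleanly: specifically, ensuring $f_i \to 0$ at the endpoints so the tail integral identity holds, and $F_i(p), 1-F_i(p) > 0$ throughout the domain so nothing blows up. Both follow from standard facts about log-concave densities (exponential tail decay and positivity on the convex support), but deserve an explicit remark, especially if the domain is a proper subinterval of $\RR$. Everything else is a direct calculation, and the proof is really just the classical observation that log-concave densities have a strictly decreasing reverse-hazard rate and a strictly increasing hazard rate, applied symmetrically.
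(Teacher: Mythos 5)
Your proof is correct, but it inverts the logical order of the paper's argument in a way that makes it more self-contained. The paper first invokes the known closure result that log-concavity of $f_i$ implies log-concavity of $F_i$ and $1-F_i$ (citing an external reference), concludes from this that all three of $l_i,d_i,v_i$ are derivatives of concave functions and hence decreasing, and only then reads off $l_i<d_i<v_i$ from the explicit formulas for $v_i'$ and $l_i'$. You instead prove the inequalities \emph{first}, using only the hypothesis that $d_i$ is strictly decreasing together with the averaging identity $f_i(p)=\int_{-\infty}^p d_i(x)f_i(x)\,\dx > d_i(p)F_i(p)$ and its upper-tail analogue, and then recover strict monotonicity of $v_i$ and $l_i$ from the sign computation $v_i'(p)\propto d_i(p)-v_i(p)<0$, $l_i'(p)\propto l_i(p)-d_i(p)<0$. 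The net effect is that you re-derive the cited closure theorem rather than assume it, and you get strictness for free, whereas the paper's appeal to ``log-concavity of $F_i$'' technically only yields weak concavity of $\ln F_i$ and hence needs the same derivative computation to upgrade to strict decrease. One small correction to your boundary discussion: it is \emph{not} true that a log-concave density vanishes at a finite endpoint of its support (a Gaussian truncated to $[0,1]$ is strictly log-concave with positive density at both endpoints), so the identity $f_i(p)=\int_{-\infty}^p f_i'(x)\,\dx$ can fail on a bounded domain. Your argument survives anyway, because the correct identity $f_i(p)-f_i(a)=\int_a^p d_i(x)f_i(x)\,\dx>d_i(p)F_i(p)$ carries a boundary term $f_i(a)\ge 0$ of favorable sign, and symmetrically $f_i(b)\ge 0$ on the upper tail; you should state it that way rather than claim endpoint decay.
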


For the remainder of this section, we fix the analysis with respect to some player $i$. As argued in \Cref{sec:bayes_model}, $P_i$ concentrates very narrowly in the posterior. Hence, we can estimate $P_i$ by its MAP, choosing $p$ so as to maximize:
\[\Pr(P_i=p\mid E^L_i,E^W_i) \propto f_i(p) \Pr(E^L_i,E^W_i\mid P_i=p).\]

Define $j\succ i$, $j\prec i$, $j\sim i$ as shorthand for $j\in E^L_i$, $j\in E^W_i$, $j\in \mathcal P\setminus (E^L_i\cup E^W_i)$ (that is, $P_j>P_i$, $P_j<P_i$, $P_j=P_i$), respectively. The following theorem yields our MAP estimate:
\begin{theorem}
\label{thm:uniq-max}
Suppose that for all $j$, $f_j$ is continuously differentiable and log-concave. Then the unique maximizer of $\Pr(P_i=p\mid E^L_i,E^W_i)$ is given by the unique zero of
\[Q_i(p) := \sum_{j \succ i} l_j(p) + \sum_{j \sim i} d_j(p) + \sum_{j \prec i} v_j(p).\]
\end{theorem}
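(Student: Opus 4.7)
The plan is to identify the log-posterior whose derivative is exactly $Q_i$, then use Lemma~\ref{lem:decrease} to get strict monotonicity and pair it with an intermediate-value argument for existence of a unique zero.

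First I would write out the posterior explicitly. Conditioned on $P_i=p$, each opponent $j$ contributes independently: the event $j \succ i$ has probability $1-F_j(p)$, the event $j \prec i$ has probability $F_j(p)$, and (via the $\epsilon$-bucket limit noted in the footnote of \Cref{sec:bayes_model}) a tied opponent $j \sim i$ with $j \neq i$ contributes a factor proportional to $f_j(p)$. Multiplying against the prior density $f_i(p)$ and applying Bayes' rule,
\[
\Pr(P_i=p\mid E^L_i, E^W_i) \;\propto\; f_i(p) \prod_{j \succ i}(1-F_j(p)) \prod_{j \prec i} F_j(p) \prod_{\substack{j \sim i\\ j \neq i}} f_j(p).
\]

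Next I would differentiate the logarithm. Since $\tfrac{d}{dp}\ln f_i = d_i$, $\tfrac{d}{dp}\ln(1-F_j) = l_j$, $\tfrac{d}{dp}\ln F_j = v_j$, and $\tfrac{d}{dp}\ln f_j = d_j$, the derivative of the log-posterior is exactly $Q_i(p)$ once the $d_i(p)$ term from the prior is absorbed into $\sum_{j \sim i} d_j(p)$ (the paper's convention places $i$ itself in the $\sim$ class). Now Lemma~\ref{lem:decrease} applies termwise, so $Q_i$ is continuous and strictly decreasing. Strict monotonicity forces the log-posterior to be strictly concave, and therefore any zero of $Q_i$ is automatically \emph{the} unique global maximizer.

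The remaining step is existence of a zero: I would show that $Q_i$ attains both positive and negative values and invoke the intermediate value theorem. For the log-concave families used in the paper, the asymptotics $v_j(p) \to +\infty$, $l_j(p) \to 0^-$ as $p \to -\infty$, and $v_j(p) \to 0^+$, $l_j(p) \to -\infty$ as $p \to +\infty$, together with the strictly-decreasing $d_j$'s, make this essentially automatic whenever both $E^L_i$ and $E^W_i$ are nonempty. I expect the main technical subtlety to lie here: verifying the sign change uniformly across all admissible continuously differentiable log-concave $f_j$, and handling the degenerate corner cases where $E^L_i$ or $E^W_i$ is empty, in which one must lean on the asymptotics of $d_i$ from the prior alone (and strict log-concavity of $f_i$, so that $d_i$ ranges across $0$) to drive $Q_i$ past zero. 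A secondary, essentially bookkeeping obstacle is making the $\epsilon$-bucket limit rigorous for the tied-opponents factor, which boils down to $\Pr(P_j \in (p-\epsilon/2, p+\epsilon/2))/\epsilon \to f_j(p)$ by absolute continuity.
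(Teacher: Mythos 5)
Your proposal follows essentially the same route as the paper: the same $\epsilon$-bucket limit yielding the likelihood $\prod_{j\succ i}(1-F_j)\prod_{j\prec i}F_j\prod_{j\sim i,\,j\ne i}f_j$, the same absorption of the prior $f_i$ into the self-tie term, \Cref{lem:decrease} for continuity and strict monotonicity of $Q_i$, and an intermediate-value argument for the zero. One caveat on your existence step: the asymptotics $v_j(p)\to+\infty$ as $p\to-\infty$ and $l_j(p)\to-\infty$ as $p\to+\infty$ are false for the paper's own logistic model, where these ratios tend to the finite constants $1/\bar\delta_j$ and $-1/\bar\delta_j$; what actually drives the sign change --- and what the paper uses uniformly, with no case split on whether $E^L_i$ or $E^W_i$ is empty --- is that the draw class always contains $i$ itself, that each $d_j$ must take a negative value somewhere (else $f_j$ could not integrate to one), and that $v_j\to 0^+$ as $p\to+\infty$ (symmetrically $l_j\to 0^-$ as $p\to-\infty$), so that the decreasing $d$ terms eventually dominate the vanishing $v$ (resp.\ $l$) terms.
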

The proof is relegated to the appendix. Intuitively, we're saying that the performance is the balance point between appropriately weighted wins, draws, and losses. Let's look at two specializations of our general model, to serve as running examples in this paper.

\paragraph{Gaussian performance model}
If both $S_j$ and $P_j-S_j$ are assumed to be Gaussian with known means and variances, then their independent sum $P_j$ will also be a known Gaussian. It is analytic and log-concave, so \Cref{thm:uniq-max} applies.

We substitute the well-known Gaussian pdf and cdf for $f_j$ and $F_j$, respectively. A simple binary search, or faster numerical techniques such as the Illinois algorithm or Newton's method, can be employed to solve for the maximizing $p$.

\begin{figure}
    \centering
    \includegraphics[width=1.05\columnwidth]{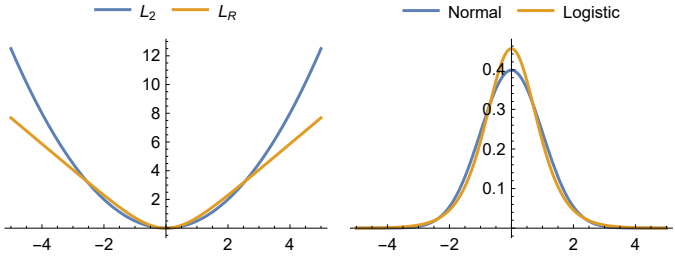}
    \caption{$L_2$ versus $L_R$ for typical values (left). Gaussian versus logistic probability density functions (right).}
    \label{fig:l2-lr-plot}
\end{figure}

\paragraph{Logistic performance model}
Now we assume the performance deviation $P_j-S_j$ has a logistic distribution with mean 0 and variance $\beta^2$. In general, the rating system administrator is free to set $\beta$ differently for each contest. Since shorter contests tend to be more variable, one reasonable choice might be to make $1/\beta^2$ proportional to the contest duration.

Given the mean and variance of the skill prior, the independent sum $P_j = S_j + (P_j-S_j)$ would have the same mean, and a variance that's increased by $\beta^2$. Unfortunately, we'll see that the logistic performance model implies a form of skill prior from which it's tough to extract a mean and variance. Even if we could, the sum does not yield a simple distribution.

For experienced players, we expect $S_j$ to contribute much less variance than $P_j-S_j$; thus, in our heuristic approximation, we take $P_j$ to have the same form of distribution as the latter. That is, we take $P_j$ to be logistic, centered at the prior rating $\mu^\pi_j = \argmax \pi_j$, with variance $\delta_j^2 = \sigma_j^2 + \beta^2$, where $\sigma_j$ will be given by \Cref{eq:variance}. This distribution is analytic and log-concave, so the same methods based on \Cref{thm:uniq-max} apply. 
Define the scale parameter $\bar\delta_j := \frac{\sqrt{3}}{\pi} \delta_j$. A logistic distribution with variance $\delta_j^2$ has cdf and pdf:
\begin{align*}
F_j(x) &= \frac { 1 } { 1 + e^{-(x-\mu^\pi_j)/\bar\delta_j} }
= \frac 12 \left(1 + \tanh\frac{x-\mu^\pi_j}{2\bar\delta_j} \right),
\\f_j(x) &= \frac { e^{(x-\mu^\pi_j)/\bar\delta_j} } { \bar\delta_j\left( 1 + e^{(x-\mu^\pi_j)/\bar\delta_j} \right)^2}
= \frac { 1 } { 4\bar\delta_j} \sech^2\frac{x-\mu^\pi_j}{2\bar\delta_j}.
\end{align*}

The logistic distribution satisfies two very convenient relations:
\begin{align*}
F'_j(x) = f_j(x) &= F_j(x) (1 - F_j(x)) / \bar\delta_j,
\\f'_j(x) &= f_j(x) (1 - 2F_j(x)) / \bar\delta_j,
\end{align*}
from which it follows that
\[d_j(p)
= \frac{1 - 2F_j(p)}{\bar\delta}
= \frac{-F_j(p)}{\bar\delta} + \frac{1 - F_j(p)}{\bar\delta}
= l_j(p) + v_j(p).\]

In other words, a tie counts as the sum of a win and a loss. This can be compared to the approach (used in Elo, Glicko, TopCoder, and Codeforces) of treating each tie as half a win plus half a loss.\footnote{Elo-MMR, too, can be modified to split ties into half win plus half loss. It's easy to check that \Cref{lem:decrease} still holds if $d_j(p)$ is replaced by
$w_l l_j(p) + w_v v_j(p)$
for some $w_l,w_v\in [0,1]$ with $|w_l-w_v|<1$.
In particular, we can set $w_l=w_v=0.5$. The results in \Cref{sec:properties} won't be altered by this change.}

Finally, putting everything together:
\[Q_i(p) = \sum_{j \succeq i} l_j(p) + \sum_{j \preceq i} v_j(p)
= \sum_{j \succeq i} \frac{-F_j(p)}{\bar\delta_j} + \sum_{j \preceq i} \frac{1 - F_j(p)}{\bar\delta_j}.\]
Our estimate for $P_i$ is the zero of this expression. The terms on the right correspond to probabilities of winning and losing against each player $j$, weighted by $1/\bar\delta_j$. Accordingly, we can interpret $\sum_{j\in \cP} (1-F_j(p))/\bar\delta_j$ as a weighted expected rank of a player whose performance is $p$. Similar to the performance computations in Codeforces and TopCoder, $P_i$ can thus be viewed as the performance level at which one's expected rank would equal $i$'s actual rank.

    \subsection{Belief update}
\label{sec:belief}

Having estimated $P_{i,t}$ in the first phase, the second phase is rather simple. Ignoring normalizing constants, \Cref{eq:new-obj} tells us that the pdf of the skill posterior can be obtained as the pointwise product of the pdfs of the skill prior and the performance model. When both factors are differentiable and log-concave, then so is their product. Its maximum is the new rating $\mu_{i,t}$; let's see how to compute it for the same two specializations of our model.

\paragraph{Gaussian skill prior and performance model}
When the skill prior and performance model are Gaussian with known means and variances, multiplying their pdfs yields another known Gaussian. Hence, the posterior is compactly represented by its mean $\mu_{i,t}$, which coincides with the MAP and rating; and its variance $\sigma_{i,t}^2$, which is our \textbf{uncertainty} regarding the player's skill.

\paragraph{Logistic performance model}
When the performance model is non-Gaussian, the multiplication does not simplify so easily. By \Cref{eq:new-obj}, each round contributes an additional factor to the belief distribution. In general, we allow it to consist of a collection of simple log-concave factors, one for each round in which player $i$ has participated. Denote the participation history by
\[\cH_{i,t} := \{k\in\{1,\ldots,t\}:i\in\mathcal P_k\}.\]

Since each player can be considered in isolation, we'll omit the subscript $i$. Specializing to the logistic setting, each $k\in\cH_t$ contributes a logistic factor to the posterior, with mean $p_k$ and variance $\beta_k^2$. We still use a Gaussian initial prior, with mean and variance denoted by $p_0$ and $\beta_0^2$, respectively. Postponing the discussion of skill evolution to \Cref{sec:skill-drift}, for the moment we assume that $S_k=S_0$ for all $k$. The posterior pdf, up to normalization, is then
\begin{align}
&\pi_0(s) \prod_{k\in\cH_t} \Pr(P_k=p_k \mid S_k=s) \nonumber
\\&\propto \exp\left( -\frac{(s-p_0)^2}{2\beta_0^2} \right) \label{eq:posterior}
\prod_{k\in\cH_t} \sech^{2}\left( \frac\pi{\sqrt{12}} \frac{s-p_k} {\beta_k} \right).
\end{align}

Maximizing the posterior density amounts to minimizing its negative logarithm. Up to a constant offset, this is given by
\begin{align*}
L(s) &:= L_2\left(\frac{s-p_0}{\beta_0}\right)
+ \sum_{k\in\cH_t} L_R\left(\frac{s-p_k}{\beta_k}\right),
\\\text{where }L_2(x) &:= \frac 12 x^2\text{ and }
L_R(x) := 2\ln\left(\cosh \frac{\pi x}{\sqrt{12}}\right).
\end{align*}
\begin{equation}
\label{eq:loss}
\text{Thus, }L'(s) = \frac{s-p_0}{\beta_0^2} + \sum_{k\in\cH_t} \frac{\pi}{\beta_k\sqrt{3}} \tanh \frac{(s-p_k)\pi}{\beta_k\sqrt{12}}.
\end{equation}

$L'$ is continuous and strictly increasing in $s$, so its zero is unique: it is the MAP $\mu_t$. Similar to what we did in the first phase, we can solve for $\mu_t$ with either binary search or Newton's method.

We pause to make an important observation. From \Cref{eq:loss}, the rating carries a rather intuitive interpretation: Gaussian factors in $L$ become $L_2$ penalty terms, whereas logistic factors take on a more interesting form as $L_R$ terms. From \Cref{fig:l2-lr-plot}, we see that the $L_R$ term behaves quadratically near the origin, but linearly at the extremities, effectively interpolating between $L_2$ and $L_1$ over a scale of magnitude $\beta_k$ 

It is well-known that minimizing a sum of $L_2$ terms pushes the argument towards a weighted mean, while minimizing a sum of $L_1$ terms pushes the argument towards a weighted median. With $L_R$ terms, the net effect is that $\mu_t$ acts like a robust average of the historical performances $p_k$. Specifically, one can check that
\[\mu_t = \frac{\sum_k w_k p_k}{\sum_k w_k}, \text{ where } w_0 := \frac{1}{\beta_0^2} \text{ and }\]
\begin{equation}
\label{eq:average}
w_k := \frac{\pi}{(\mu_t-p_k)\beta_k\sqrt{3}}\tanh\frac{(\mu_t-p_k)\pi}{\beta_k\sqrt{12}} \text{ for }k\in\cH_t.
\end{equation}

$w_k$ is close to $1/\beta_k^2$ for typical performances, but can be up to $\pi^2/6$ times more as $|\mu_t-p_k| \rightarrow 0$, or vanish as $|\mu_t-p_k| \rightarrow\infty$. This feature is due to the thicker tails of the logistic distribution, as compared to the Gaussian, resulting in an algorithm that resists drastic rating changes in the presence of a few unusually good or bad performances. We'll formally state this \emph{robustness} property in \Cref{thm:robust}.


\paragraph{Estimating skill uncertainty} While there is no easy way to compute the variance of a posterior in the form of \Cref{eq:posterior}, it will be useful to have some estimate $\sigma_t^2$ of uncertainty. There is a simple formula in the case where all factors are Gaussian. Since moment-matched logistic and normal distributions are relatively close (c.f. \Cref{fig:l2-lr-plot}), we apply the same formula:
\begin{equation}
\label{eq:variance}
\frac{1}{\sigma_t^2} := \sum_{k\in\{0\}\cup\cH_t}\frac{1}{\beta_k^2}.
\end{equation}



\section{Skill evolution over time}
\label{sec:skill-drift}

Factors such as training and resting will change a player's skill over time. If we model skill as a static variable, our system will eventually grow so confident in its estimate that it will refuse to admit substantial changes. To remedy this, we introduce a skill evolution model, so that in general $S_t \neq S_{t'}$ for $t \neq t'$. Now rather than simply being equal to the previous round's posterior, the skill prior at round $t$ is given by
\begin{equation}
\label{eq:drift}
\pi_t(s) = \int \Pr(S_t = s \mid S_{t-1} = x) \Pr(S_{t-1} = x \mid P_{<t}) \,\dx.
\end{equation}

The factors in the integrand are the skill evolution model and the previous round's posterior, respectively. Following other Bayesian rating systems (e.g., Glicko, Glicko-2, and TrueSkill~\cite{G99, G12, HMG06}), we model the skill diffusions $S_t-S_{t-1}$ as independent zero-mean Gaussians. That is, $\Pr(S_t \mid S_{t-1}=x)$ is a Gaussian with mean $x$ and some variance $\gamma_t^2$. The Glicko system sets $\gamma_t^2$ proportionally to the time elapsed since the last update, corresponding to a continuous Brownian motion. Codeforces and TopCoder simply set $\gamma_t$ to a constant when a player participates, and zero otherwise, corresponding to changes that are in proportion to how often the player competes. Now we are ready to complete the two specializations of our rating system.

\paragraph{Gaussian skill prior and performance model}
If both the prior and performance distributions at round $t-1$ are Gaussian, then the posterior is also Gaussian. Adding an independent Gaussian diffusion to our posterior on $S_{t-1}$ yields a Gaussian prior on $S_t$. By induction, the skill belief distribution forever remains Gaussian. This Gaussian specialization of the Elo-MMR framework lacks the R for robustness (see \Cref{thm:robust}), so we call it Elo-MM$\chi$.

\paragraph{Logistic performance model}
After a player's first contest round, the posterior in \Cref{eq:posterior} becomes non-Gaussian, rendering the integral in \Cref{eq:drift} intractable.

A very simple approach would be to replace the full posterior in \Cref{eq:posterior} by a Gaussian approximation with mean $\mu_t$ (equal to the posterior MAP) and variance $\sigma_t^2$ (given by \Cref{eq:variance}). As in the previous case, applying diffusions in the Gaussian setting is a simple matter of adding means and variances.

With this approximation, no memory is kept of the individual performances $P_t$. Priors are simply Gaussian, while posterior densities are the product of two factors: the Gaussian prior, and a logistic factor corresponding to the latest performance. To ensure robustness (see \Cref{sec:robust}), $\mu_t$ is computed as the argmax of this posterior \emph{before} replacement by its Gaussian approximation. We call the rating system that takes this approach Elo-MMR($\infty$).

As the name implies, it turns out to be a special case of Elo-MMR($\rho$). In the general setting with $\rho \in [0,\infty)$, we keep the full posterior from \Cref{eq:posterior}. Since we cannot tractably compute the effect of a Gaussian diffusion, we seek a heuristic derivation of the next round's prior, retaining a form similar to \Cref{eq:posterior} while satisfying many of the same properties as the intended diffusion.

\subsection{Desirable properties of a ``pseudodiffusion''}
\label{sec:desirable-props}
We begin by listing some properties that our skill evolution algorithm, henceforth called a ``pseudodiffusion'', should satisfy. The first two properties are natural:
\begin{itemize}[leftmargin=*]
\item \emph{Aligned incentives.} First and foremost, the pseudodiffusion must not break the aligned incentives property of our rating system. That is, a rating-maximizing player should never be motivated to lose on purpose (\Cref{thm:mono}).
\item \emph{Rating preservation.} The pseudodiffusion must not alter the $\argmax$ of the belief density. That is, the rating of a player should not change: $\mu^\pi_t = \mu_{t-1}$.
\end{itemize}
In addition, we borrow four properties of Gaussian diffusions:
\begin{itemize}[leftmargin=*]
\item \emph{Correct magnitude.} Pseudodiffusion with parameter $\gamma^2$ must increase the skill uncertainty, as measured by \Cref{eq:variance}, by $\gamma^2$.
\item \emph{Composability.} Two pseudodiffusions applied in sequence, first with parameter $\gamma_1^2$ and then with $\gamma_2^2$, must have the same effect as a single pseudodiffusion with parameter $\gamma_1^2 + \gamma_2^2$.
\item \emph{Zero diffusion.} In the limit as $\gamma \rightarrow 0$, the effect of pseudodiffusion must vanish, i.e., not alter the belief distribution.
\item \emph{Zero uncertainty.} In the limit as $\sigma_{t-1}\rightarrow 0$ (i.e., when the previous rating $\mu_{t-1}$ is a perfect estimate of $S_{t-1}$), our belief on $S_t$ must become Gaussian with mean $\mu_{t-1}$ and variance $\gamma^2$. Finer-grained information regarding the prior history $P_{\le t}$ must be erased.
\end{itemize}
In particular, Elo-MMR($\infty$) fails the \emph{zero diffusion} property because it simplifies the belief distribution, even when $\gamma=0$. In the proof of \Cref{thm:diffuse-prop}, we'll see that Elo-MMR($0$) fails the \emph{zero uncertainty} property. Thus, it is in fact necessary to have $\rho$ strictly positive and finite. In \Cref{sec:robust}, we'll come to interpret $\rho$ as a kind of inverse momentum.

\subsection{A heuristic pseudodiffusion algorithm}
\label{sec:pseudodiffusion}
Each factor in the posterior (see \Cref{eq:posterior}) has a parameter $\beta_k$. Define a factor's \textbf{weight} to be $w_k := 1/\beta_k^2$, which by \Cref{eq:variance} contributes to the \textbf{total weight} $\sum_k w_k=1/\sigma_t^2$. Here, unlike in \Cref{eq:average}, $w_k$ does not depend on $|\mu_t-p_k|$.

The approximation step of Elo-MMR($\infty$) replaces all the logistic factors by a single Gaussian whose variance is chosen to ensure that the total weight is preserved. In addition, its mean is chosen to preserve the player's rating, given by the unique zero of \Cref{eq:loss}. Finally, the diffusion step of Elo-MMR($\infty$) increases the Gaussian's variance, and hence the player's skill uncertainty, by $\gamma_t^2$; this corresponds to a decay in the weight.

To generalize the idea, we interleave the two steps in a continuous manner. The approximation step becomes a \textbf{transfer step}: rather than replace the logistic factors outright, we take away the same fraction from each of their weights, and place the sum of removed weights onto a new Gaussian factor. The diffusion step becomes a \textbf{decay step}, reducing each factor's weight by the same fraction, chosen such that the overall uncertainty is increased by $\gamma_t^2$.

To make the idea precise, we generalize the posterior from \Cref{eq:posterior} with fractional \textbf{multiplicities} $\omega_k$, initially set to $1$ for each $k\in\{0\}\cup\cH_t$. The $k$'th factor is raised to the power $\omega_k$; in \Cref{eq:loss,eq:variance}, the corresponding term is multiplied by $\omega_k$. In other words, the latter equation is replaced by
\[\frac{1}{\sigma_t^2} := \sum_{k\in\{0\}\cup\cH_t}w_k,\text{ where }w_k := \frac{\omega_k}{\beta_k^2}.\]

For $\rho\in [0,\infty]$, the Elo-MMR($\rho$) algorithm continuously and simultaneously performs transfer and decay, with transfer proceeding at $\rho$ times the rate of decay. Holding $\beta_k$ fixed, changes to $\omega_k$ can be described in terms of changes to $w_k$:
\begin{align*}
\dot w_0 &= -r(t)w_0 + \rho r(t) \sum_{k\in\cH_t} w_k,
\\\dot w_k &= -(1+\rho)r(t)w_k \quad\text{for }k\in\cH_t,
\end{align*}
where the arbitrary decay rate $r(t)$ can be eliminated by a change of variable $\mathrm{d}\tau = r(t)\dt$. After some time $\Delta\tau$, the total weight will have decayed by a factor $\kappa := e^{-\Delta\tau}$, resulting in the new weights:
\begin{align*}
w_0^{new} &= \kappa w_0 + \left(\kappa-\kappa^{1+\rho}\right)\sum_{k\in\cH_t} w_k,
\\w_k^{new} &= \kappa^{1+\rho}w_k \quad\text{for }k\in\cH_t.
\end{align*}
In order for the uncertainty to increase from $\sigma_{t-1}^2$ to $\sigma_{t-1}^2+\gamma_t^2$, we must solve $\kappa/\sigma_{t-1}^2 = 1/(\sigma_{t-1}^2+\gamma_t^2)$ for the decay factor:

\setlength{\floatsep}{0pt}
\setlength{\textfloatsep}{1em}
\begin{algorithm}[t]
\caption{Elo-MMR($\rho,\beta, \gamma$)}
\label{alg:main}
\begin{algorithmic}
\FORALL{rounds $t$}
\FORALL{players $i\in\mathcal P_t$ in parallel}
\IF{$i$ has never competed before}
\STATE {$\mu_i, \sigma_i \gets \mu_{newcomer}, \sigma_{newcomer}$}
\STATE {$p_i, w_i \gets [\mu_i], [1/\sigma_i^2]$}
\ENDIF
\STATE diffuse($i,\gamma,\rho$)
\STATE $\mu^\pi_i, \delta_i \gets \mu_i,\sqrt{\sigma_i^2 + \beta^2}$
\ENDFOR
\FORALL{$i\in\mathcal P_t$ in parallel}
\STATE update($i,E_t,\beta$)
\ENDFOR
\ENDFOR
\end{algorithmic}
\end{algorithm}
\begin{algorithm}[t]
\caption{diffuse($i,\gamma,\rho$)}
\label{alg:diffuse}
\begin{algorithmic}
\STATE $\kappa \gets (1+\gamma^2/\sigma_i^2)^{-1}$
\STATE $w_G, w_L \gets \kappa^\rho w_{i,0}, (1-\kappa^\rho) \sum_{k} w_{i,k}$
\STATE $p_{i,0} \gets (w_G p_{i,0} + w_L \mu_i) / (w_G+w_L)$
\STATE $w_{i,0} \gets \kappa (w_G+w_L)$
\FORALL{$k\ne 0$}
\STATE $w_{i,k} \gets \kappa^{1+\rho}w_{i,k}$
\ENDFOR
\STATE $\sigma_i \gets \sigma_i / \sqrt\kappa$
\end{algorithmic}
\end{algorithm}
\begin{algorithm}[t]
\caption{update($i,E,\beta$)}
\label{alg:update}
\begin{algorithmic}
\STATE $p \gets \mathrm{zero}\left(\sum_{j\preceq i}\frac{1}{\delta_j}\left( \tanh\frac {x - \mu^\pi_j} {2\bar\delta_j} - 1 \right) + \sum_{j\succeq i}\frac{1}{\delta_j}\left( \tanh\frac {x - \mu^\pi_j} {2\bar\delta_j} + 1 \right)\right)$
\STATE $p_i$.push($p$)
\STATE $w_i$.push($1/\beta^2$)
\STATE $\mu_i \gets \mathrm{zero}\left(w_{i,0}(x-p_{i,0}) + \sum_{k\ne 0} \frac{w_{i,k}\beta^2}{\bar\beta} \tanh \frac {x-p_{i,k}} {2\bar\beta}\right)$
\end{algorithmic}
\end{algorithm}

\[\kappa = \left(1 + \frac{\gamma_t^2}{\sigma_{t-1}^2}\right)^{-1}.\]
In order for this operation to preserve ratings, the transferred weight must be centered at $\mu_{t-1}$; see \Cref{alg:diffuse} for details.

\Cref{alg:main} details the full Elo-MMR($\rho$) rating system. The main loop runs whenever a round of competition takes place. First, new players are initialized with a Gaussian prior. Then, changes in player skill are modeled by \Cref{alg:diffuse}. Given the round rankings $E_t$, the first phase of \Cref{alg:update} solves an equation to estimate $P_t$. Finally, the second phase solves another equation for the rating $\mu_t$. 

The hyperparameters $\rho,\beta,\gamma$ are domain-dependent, and can be set by standard hyperparameter search techniques. For convenience, we assume $\beta$ and $\gamma$ are fixed and use the shorthand $\bar\beta_k := \frac{\sqrt{3}}{\pi} \beta_k$.

\begin{theorem}
\label{thm:diffuse-prop}
\Cref{alg:diffuse} with $\rho\in(0,\infty)$ meets all of the properties listed in \Cref{sec:desirable-props}.
\end{theorem}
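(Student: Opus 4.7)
My approach is to verify the six properties of \Cref{sec:desirable-props} one at a time, tracking how \Cref{alg:diffuse} acts on the parameters of the generalized posterior \eqref{eq:posterior}: the Gaussian factor $(w_0, p_0)$ and the logistic weights $\{w_k\}_{k \in \cH_t}$, equivalently the multiplicities $\omega_k = \beta_k^2 w_k$ introduced at the end of \Cref{sec:pseudodiffusion}. Log-concavity of every factor is preserved by raising to a positive power, so the post-diffusion posterior remains log-concave, and its argmax is the unique critical point.

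Four of the properties reduce to direct calculation. \emph{Correct magnitude:} the new total weight is $w_0^{new} + \sum_{k\ne 0} w_k^{new} = \kappa(w_G + w_L) + \kappa^{1+\rho}\sum_{k\ne 0} w_k$; expanding $w_G = \kappa^\rho w_0$ and $w_L = (1-\kappa^\rho)\sum_k w_k$ collapses this to $\kappa \sum_k w_k = \kappa/\sigma_{t-1}^2 = 1/(\sigma_{t-1}^2+\gamma^2)$, by the choice of $\kappa$. \emph{Zero diffusion:} $\gamma \to 0$ forces $\kappa,\kappa^\rho \to 1$, so $w_G \to w_0$, $w_L \to 0$, $p_0^{new} \to p_0$, and each $w_k^{new} \to w_k$. \emph{Zero uncertainty:} $\sigma_{t-1} \to 0$ forces $\kappa \to 0$; since $\rho>0$, every $w_k^{new} = \kappa^{1+\rho} w_k$ and $w_G=\kappa^\rho w_0$ vanish, while the total-weight identity $\kappa\sum_k w_k \to 1/\gamma^2$ forces $w_0^{new}\to 1/\gamma^2$, and $w_L/w_G \to \infty$ forces $p_0^{new}\to \mu_{t-1}$; the posterior thus collapses to $\mathcal{N}(\mu_{t-1},\gamma^2)$. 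When $\rho = 0$ instead, $\kappa^\rho \equiv 1$, so transfer is trivial and every weight is uniformly scaled by $\kappa$; this preserves the shape of the posterior, which therefore does not become Gaussian in the limit, showing that Elo-MMR($0$) fails zero uncertainty. \emph{Aligned incentives:} \Cref{alg:diffuse} runs before \Cref{alg:update} in \Cref{alg:main} and its inputs never include $E_t$; combined with rating preservation (below), this means the composition diffuse$\circ$update inherits monotonicity in the player's own performance from update alone, which is the content of \Cref{thm:mono}.

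\emph{Rating preservation} is the crucial algebraic identity. Using that the negative-log posterior satisfies $L'_{old}(\mu_{t-1})=0$ and that the non-Gaussian contributions to $L'$ each scale by exactly $\kappa^{1+\rho}$ under $\omega_k \mapsto \kappa^{1+\rho}\omega_k$, it suffices to show $w_0^{new}(\mu_{t-1}-p_0^{new}) = \kappa^{1+\rho} w_0(\mu_{t-1}-p_0)$. The algorithm sets $(w_G+w_L)p_0^{new} = w_G p_0 + w_L \mu_{t-1}$, which rearranges to $(w_G+w_L)(\mu_{t-1}-p_0^{new}) = w_G(\mu_{t-1}-p_0)$; multiplying by $\kappa$ and using $\kappa w_G = \kappa^{1+\rho} w_0$ gives exactly the required identity. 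Strict convexity of the negative log-posterior then promotes $\mu_{t-1}$ from a critical point to the unique global maximizer, so $\mu^\pi_t = \mu_{t-1}$.

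The main obstacle is \emph{composability}, because it couples the weight and mean updates nonlinearly. Let $\kappa_1,\kappa_2$ be the two decay factors. Since the intermediate uncertainty is $\sigma_{t-1}^2+\gamma_1^2$, we get $\kappa_2 = (\sigma_{t-1}^2+\gamma_1^2)/(\sigma_{t-1}^2+\gamma_1^2+\gamma_2^2)$ and thus $\kappa_1\kappa_2 = \kappa_{12}$, matching the single-step decay for $\gamma_1^2+\gamma_2^2$. The logistic weights compose immediately: $(\kappa_1\kappa_2)^{1+\rho} = \kappa_{12}^{1+\rho}$. For the Gaussian weight, the key identity is $\kappa_2(\kappa_1-\kappa_1^{1+\rho}) + \kappa_1^{1+\rho}(\kappa_2-\kappa_2^{1+\rho}) = \kappa_{12} - \kappa_{12}^{1+\rho}$, which after substitution into $w_0^{(2)}$ recovers exactly the single-step expression $\kappa_{12} w_0 + (\kappa_{12}-\kappa_{12}^{1+\rho})\sum_{k\ne 0} w_k$. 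To match the Gaussian means I would argue by uniqueness: the (already established) rating-preservation equation $w_0^{new}(\mu_{t-1}-p_0^{new}) = \kappa_{12}^{1+\rho} w_0(\mu_{t-1}-p_0)$, together with the common value of $w_0^{new}>0$, uniquely pins down $p_0^{new}$, so both the sequential and single-step diffusions produce the same Gaussian mean, completing the proof.
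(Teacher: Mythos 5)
Your proof is correct and follows the same property-by-property structure as the paper's own proof; the four ``easy'' properties (aligned incentives, correct magnitude, zero diffusion, zero uncertainty) are handled essentially identically, including the deferral to \Cref{thm:mono} and the observation that $\rho=0$ breaks zero uncertainty. Where you genuinely diverge is in the two remaining properties. For \emph{rating preservation}, the paper argues conceptually that decay and transfer multiply $L'$ by the constants $\kappa$ and $\kappa^\rho$ (preserving its zero) before adding a Gaussian term whose contribution to $L'$ vanishes at its center $\mu_{t-1}$; your identity $w_0^{new}(\mu_{t-1}-p_0^{new})=\kappa^{1+\rho}w_0(\mu_{t-1}-p_0)$ is exactly this fact after the old and transferred Gaussians are merged, so the two arguments are equivalent, yours being the more explicit. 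For \emph{composability}, the paper simply appeals to the fact that every pseudodiffusion integrates the same differential equations, so composing two of them is running the flow for time $\Delta\tau_1+\Delta\tau_2$, with \emph{correct magnitude} converting elapsed time into $\gamma_1^2+\gamma_2^2$; you instead verify that the discrete update rules compose, via $\kappa_1\kappa_2=\kappa_{12}$, the identity $\kappa_2(\kappa_1-\kappa_1^{1+\rho})+\kappa_1^{1+\rho}(\kappa_2-\kappa_2^{1+\rho})=\kappa_{12}-\kappa_{12}^{1+\rho}$, and a uniqueness argument (via rating preservation and the common value of $w_0^{new}$) for the composed Gaussian mean. Your route costs more algebra but is self-contained: it does not presuppose that \Cref{alg:diffuse} exactly realizes the continuous-time derivation of \Cref{sec:pseudodiffusion}, which is the implicit load-bearing step in the paper's one-line argument. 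Both are sound.
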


\begin{proof}
We go through each of the six properties in order.
\begin{itemize}[leftmargin=*]
    \item \emph{Aligned incentives.} This property will be stated in \Cref{thm:mono}. To ensure that its proof carries through, the relevant facts to note here are that the pseudodiffusion algorithm ignores the performances $p_k$, and centers the transferred Gaussian weight at the rating $\mu_{t-1}$, which is trivially monotonic in $\mu_{t-1}$.
    \item \emph{Rating preservation.} Recall that the rating is the unique zero of $L'$, defined in \Cref{eq:loss}. To see that this zero is preserved, note that the decay and transfer operations multiply $L'$ by constants ($\kappa$ and $\kappa^\rho$, respectively), before adding the new Gaussian term, whose contribution to $L'$ is zero at its center.
    \item \emph{Correct magnitude.} Follows from our derivation for $\kappa$.
    \item \emph{Composability.} Follows from \emph{correct magnitude} and the fact that every pseudodiffusion follows the same differential equations.
    \item \emph{Zero diffusion.} As $\gamma\rightarrow 0$, $\kappa\rightarrow 1$. Provided that $\rho<\infty$, we also have $\kappa^\rho\rightarrow 1$. Hence, for all $k\in\{0\}\cup\cH_t$, $w_k^{new} \rightarrow w_k$.
    \item \emph{Zero uncertainty.} As $\sigma_{t-1}\rightarrow 0$, $\kappa\rightarrow 0$. The total weight decays from $1/\sigma_{t-1}^2$ to $\gamma^2$. Provided that $\rho > 0$, we also have $\kappa^\rho\rightarrow 0$, so these weights transfer in their entirety, leaving behind a Gaussian with mean $\mu_{t-1}$, variance $\gamma^2$, and no additional history. \qedhere
\end{itemize}
\end{proof}

\section{Theoretical Properties}
\label{sec:properties}
In this section, we see how the simplicity of the Elo-MMR formulas enables us to rigorously prove that the rating system aligns incentives, is robust, and is computationally efficient.

\vspace{-.6em}
\subsection{Aligned incentives}
\label{sec:mono}

To demonstrate the need for \emph{aligned incentives}, let's look at the consequences of violating this property in the TopCoder and Glicko-2 rating systems. These systems track a ``volatility'' for each player, which estimates the variance of their performances. A player whose recent performance history is more consistent would be assigned a lower volatility score, than one with wild swings in performance. The volatility acts as a multiplier on rating changes; thus, players with an extremely low or high performance will have their subsequent rating changes amplified.

While it may seem like a good idea to boost changes for players whose ratings are poor predictors of their performance, this feature has an exploit. By intentionally performing at a weaker level, a player can amplify future increases to an extent that more than compensates for the immediate hit to their rating. A player may even ``farm'' volatility by alternating between very strong and very weak performances. After acquiring a sufficiently high volatility score, the strategic player exerts their honest maximum performance over a series of contests. The amplification eventually results in a rating that exceeds what would have been obtained via honest play. This type of exploit was discovered in both TopCoder competitions and the Pokemon Go video game~\cite{forivsektheoretical, pokemongo}. For a detailed example, see Table 5.3 of~\cite{forivsektheoretical}.

Remarkably, Elo-MMR combines the best of both worlds: we'll see in \Cref{sec:robust} that, for $\rho\in (0,\infty)$, Elo-MMR($\rho$) also boosts changes to inconsistent players. And yet, as we'll prove in this section, no such strategic incentive exists in \emph{any} version of Elo-MMR.
\looseness=-1

Recall that, for the purposes of the algorithm, the performance $p_i$ is defined to be the unique zero of the function $Q_i(p) := \sum_{j \succ i} l_j(p) + \sum_{j \sim i} d_j(p) + \sum_{j \prec i} v_j(p)$, whose terms $l_i,d_i,v_i$ are contributed by opponents against whom $i$ lost, drew, or won, respectively. Wins (losses) are always positive (negative) contributions to a player's performance score:

\begin{lemma}
\label{lem:mono-term}
Adding a win term to $Q_i(\cdot)$, or replacing a tie term by a win term, always increases its zero. Conversely, adding a loss term, or replacing a tie term by a loss term, always decreases it.
\end{lemma}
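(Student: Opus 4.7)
The plan is to reduce everything to two structural facts already in hand: (i) by \Cref{lem:decrease}, each of $l_j,d_j,v_j$ is continuous, strictly decreasing, and they satisfy $l_j(p)<0<v_j(p)$ as well as $l_j(p)<d_j(p)<v_j(p)$ for every $p$; and (ii) by \Cref{thm:uniq-max}, $Q_i$ has a unique zero, which coincides with its unique sign change from positive to negative since $Q_i$ is a finite sum of strictly decreasing continuous functions and is therefore itself continuous and strictly decreasing.

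First I would observe the general monotonicity principle: if $\tilde Q$ and $Q$ are both continuous and strictly decreasing on $\RR$, each with a unique zero, and $\tilde Q(p) > Q(p)$ for all $p$, then the zero of $\tilde Q$ is strictly greater than the zero of $Q$ (and symmetrically with $<$ in place of $>$). This is a one-line consequence of the sign-change characterization: at the zero $p^\star$ of $Q$, $\tilde Q(p^\star) > 0$, so $\tilde Q$ has not yet crossed zero and, being strictly decreasing, must do so further to the right. This reduces the lemma to a pointwise sign check on the differences between the modified and original $Q_i$.

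Next I would treat the four cases uniformly by writing $\tilde Q_i(p) - Q_i(p)$ as a single term involving $l_{j^\star}, d_{j^\star}, v_{j^\star}$ for the newly added or replaced opponent $j^\star$:
\begin{itemize}[leftmargin=*]
\item Adding a win: difference is $v_{j^\star}(p) > 0$ by $l_{j^\star}(p) < 0 < v_{j^\star}(p)$.
\item Replacing a tie by a win: difference is $v_{j^\star}(p) - d_{j^\star}(p) > 0$ by $d_{j^\star}(p) < v_{j^\star}(p)$.
\item Adding a loss: difference is $l_{j^\star}(p) < 0$.
\item Replacing a tie by a loss: difference is $l_{j^\star}(p) - d_{j^\star}(p) < 0$ by $l_{j^\star}(p) < d_{j^\star}(p)$.
\end{itemize}
In each case $\tilde Q_i$ is still a sum of strictly decreasing continuous functions, so the general principle applies and delivers the claimed direction of movement of the zero.

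I do not expect any real obstacle here; the lemma is essentially a packaging of \Cref{lem:decrease} via the sign-change principle. The only thing to be careful about is making sure $\tilde Q_i$ still satisfies the hypotheses of \Cref{thm:uniq-max} (so that the zero under discussion exists and is unique), but this is immediate since $\tilde Q_i$ corresponds to another valid configuration of wins, draws, and losses with the same log-concave $f_j$'s, so \Cref{thm:uniq-max} applies verbatim.
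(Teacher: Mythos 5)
Your proposal is correct and takes essentially the same route as the paper: both reduce the claim to the observation that adding a pointwise-positive (resp.\ negative) term to the strictly decreasing function $Q_i$ shifts its zero right (resp.\ left), and then verify the signs of $v_{j}$, $v_{j}-d_{j}$, $l_{j}$, and $l_{j}-d_{j}$ via \Cref{lem:decrease}. Your write-up merely makes explicit the sign-change principle and the existence/uniqueness of the zero of the modified $Q_i$, which the paper leaves implicit.
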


\begin{proof}
By \Cref{lem:decrease}, $Q_i(p)$ is decreasing in $p$. Thus, adding a positive term will increase its zero whereas adding a negative term will decrease it. The desired conclusion follows by noting that, for all $j$ and $p$, $v_j(p)$ and $v_j(p)-d_j(p)$ are positive, whereas $l_j(p)$ and $l_j(p)-d_j(p)$ are negative.
\end{proof}

While not needed for our main result, a similar argument shows that performance scores are monotonic across the round standings:

\begin{theorem}
If $i \succ j$ (that is, player $i$ beats $j$) in a given round, then player $i$ and $j$'s performance estimates satisfy $p_i > p_j$.
\end{theorem}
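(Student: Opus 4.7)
The plan is to show that $Q_i(p) > Q_j(p)$ for every $p$, and then conclude by monotonicity. Concretely, both $Q_i$ and $Q_j$ are continuous, strictly decreasing functions (by \Cref{lem:decrease}) with unique zeros $p_i$ and $p_j$ respectively. If I can establish the pointwise strict inequality $Q_i(p) > Q_j(p)$ for all $p$, then in particular $Q_i(p_j) > Q_j(p_j) = 0$, so $p_j$ lies to the left of the zero of $Q_i$, giving $p_i > p_j$.

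To prove $Q_i(p) > Q_j(p)$, I would split the two sums term by term. Two terms come from the interaction between $i$ and $j$ themselves: since $i \succ j$, the player $j$ contributes $v_j(p)$ to $Q_i$ (a win term), while $i$ contributes $l_i(p)$ to $Q_j$ (a loss term); their net contribution to $Q_i(p) - Q_j(p)$ is $v_j(p) - l_i(p)$, which is strictly positive since $l_k < 0 < v_k$ pointwise (the sign facts noted just before \Cref{lem:decrease}). For every other player $k \notin \{i,j\}$, I would case-split on how $k$'s performance compares to those of $i$ and $j$. Using $P_i > P_j$, the only possibilities are: (a) $k \succ i$, hence also $k \succ j$, giving matching $l_k(p)$ terms; (b) $k \sim i$, hence $k \succ j$, giving $d_k(p)$ in $Q_i$ versus $l_k(p)$ in $Q_j$; (c) $k \prec i$ and $k \succ j$, giving $v_k(p)$ versus $l_k(p)$; (d) $k \prec i$ and $k \sim j$, giving $v_k(p)$ versus $d_k(p)$; (e) $k \prec i$ and $k \prec j$, giving matching $v_k(p)$ terms. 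In each case, the contribution to $Q_i(p)-Q_j(p)$ is nonnegative by the chain $l_k(p) < d_k(p) < v_k(p)$ from \Cref{lem:decrease}. Summing, $Q_i(p) - Q_j(p) > 0$, driven in particular by the strictly positive term $v_j(p) - l_i(p)$.

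I do not anticipate any serious obstacle: the only subtle point is making sure the case analysis is exhaustive (every opponent of $i$ is also an opponent of $j$ once the two are swapped, and the outcome sets partition cleanly under $P_i > P_j$), and noting that the $j$-vs-$i$ self-interaction term alone is already enough to make the comparison strict even if every other term happens to match. An alternative, perhaps more conceptual, route is to morph $Q_j$ into $Q_i$ by a sequence of single-term edits (add the win against $j$, remove the loss against $i$, upgrade each intermediate $k$'s contribution from loss to tie or win, etc.) and invoke \Cref{lem:mono-term} at each step; but since \Cref{lem:mono-term} as stated only literally covers some of these edits, the direct pointwise comparison above is cleaner.
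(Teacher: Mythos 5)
Your core strategy---establishing the pointwise inequality $Q_i(p) > Q_j(p)$ and then concluding from strict monotonicity of both functions---is exactly the paper's. The only structural difference is that the paper first treats the case where $i$ and $j$ are adjacent in the rankings, where the difference collapses cleanly to $\sum_{k\sim i}(d_k(p) - l_k(p)) + \sum_{k\sim j}(v_k(p) - d_k(p)) > 0$, and then extends to arbitrary pairs by induction along the standings; you instead compare an arbitrary pair directly via a five-way case split on the intermediate players. Both routes work, and yours trades the induction for a longer but self-contained case analysis.

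There is one bookkeeping slip in your treatment of the terms contributed by $i$ and $j$ themselves. Recall that the relation $k \sim i$ includes $k = i$, so $Q_i$ contains the self-tie term $d_i(p)$ and $Q_j$ contains $d_j(p)$; these are excluded from your case analysis over $k \notin \{i,j\}$ and do not cancel against each other. The full contribution of the players $i$ and $j$ to $Q_i(p) - Q_j(p)$ is $\bigl(d_i(p) + v_j(p)\bigr) - \bigl(l_i(p) + d_j(p)\bigr)$, not just $v_j(p) - l_i(p)$; under your grouping the leftover $d_i(p) - d_j(p)$ has no definite sign, so the argument as written does not quite close. The fix is immediate: pair the terms by opponent rather than by role, which gives $\bigl(d_i(p)-l_i(p)\bigr) + \bigl(v_j(p)-d_j(p)\bigr) > 0$ by \Cref{lem:decrease}. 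With that regrouping your proof is complete and matches the paper's in substance.
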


\begin{proof}
If $i \succ j$ with $i,j$ adjacent in the rankings, then
\[Q_i(p) - Q_j(p) = \sum_{k\sim i}(d_k(p) - l_k(p)) + \sum_{k\sim j}(v_k(p) - d_k(p)) > 0.\]
for all $p$. Since $Q_i$ and $Q_j$ are decreasing functions, it follows that $p_i > p_j$. By induction, this result extends to the case where $i,j$ are not adjacent in the rankings.
\end{proof}

What matters for incentives is that performance scores be \emph{counterfactually} monotonic; meaning, if we were to alter the round standings, a strategic player will always prefer to place higher:
\begin{lemma}
\label{lem:mono-perf}
In any given round, holding fixed the relative ranking of all players other than $i$ (and holding fixed all preceding rounds), the performance $p_i$ is a monotonic function of player i's prior rating and of player $i$'s rank in this round.
\end{lemma}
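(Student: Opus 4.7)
The plan is to view $p_i$ as the unique zero of the strictly decreasing function $Q_i$ from Theorem~\ref{thm:uniq-max}, and then show that each of the two parameters of interest (prior rating and rank) shifts $Q_i$ pointwise in a monotone direction. Since $Q_i$ is strictly decreasing, a pointwise increase of $Q_i$ translates to an increase of its zero $p_i$, and vice versa.

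For monotonicity in the rank, I would reduce to a single adjacent swap and iterate. If player $i$ overtakes an adjacent player $k$, then depending on the original relationship, the contribution of $k$ to $Q_i$ changes from $l_k(p)$ to $d_k(p)$, or from $d_k(p)$ to $v_k(p)$. By Lemma~\ref{lem:decrease}, $l_k < d_k < v_k$ pointwise, so $Q_i(p)$ strictly increases at every $p$; equivalently, this is exactly the statement of Lemma~\ref{lem:mono-term} (adding a win term, or replacing a tie/loss by a win). Hence the zero $p_i$ strictly increases with each upward swap, and by induction $p_i$ is (strictly) monotone in rank.

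For monotonicity in the prior rating $\mu^\pi_i$, I would use the fact that in the Elo-MMR specializations (Gaussian or logistic approximation), the prior density $f_i$ on $P_i$ is parameterized by location: changing $\mu^\pi_i$ to $\mu^\pi_i+\Delta$ replaces $f_i(\cdot)$ by $f_i(\cdot-\Delta)$. Then $d_i^{\text{new}}(p) = f_i'(p-\Delta)/f_i(p-\Delta) = d_i^{\text{old}}(p-\Delta)$, and since $d_i$ is strictly decreasing (Lemma~\ref{lem:decrease}), for $\Delta>0$ we get $d_i^{\text{new}}(p) > d_i^{\text{old}}(p)$ at every $p$. Since $i\sim i$ (the self-term belongs to the tie class), $Q_i(p)$ strictly increases pointwise, so its zero $p_i$ strictly increases.

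The main conceptual obstacle is justifying that varying the prior rating really corresponds to a pure location shift of $f_i$. In the Gaussian model this is immediate because a shift of the posterior mean shifts the entire Gaussian prior on $P_i$ rigidly. In the logistic model the heuristic in Section~\ref{sec:performance} explicitly takes $f_i$ to be logistic with location $\mu^\pi_i$ and variance $\delta_i^2$, so again the shift is rigid. In the fully general log-concave setting, one should phrase the lemma as varying the location of the prior on $P_i$ (of which the prior rating is the mode); with that reading, the same argument via $d_i^{\text{new}}(p)=d_i^{\text{old}}(p-\Delta)$ and monotonicity of $d_i$ goes through unchanged. No further calculation is required, because the only properties invoked are Lemma~\ref{lem:decrease} and strict monotonicity of $Q_i$, both already established.
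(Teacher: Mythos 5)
Your proof is correct and follows essentially the same route as the paper's: monotonicity in rank via adjacent swaps and Lemma~\ref{lem:mono-term}, and monotonicity in the prior rating via the self-tie term $d_i$ in $Q_i$, which increases pointwise under an upward location shift of the prior on $P_i$. The paper states the rating half in one terse sentence; your elaboration of the location-shift justification (and the caveat about interpreting the prior rating as a location parameter in the general log-concave setting) is a faithful filling-in of the same argument rather than a different one.
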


\begin{proof}
Monotonicity in the rating follows directly from monotonicity of the self-tie term $d_i$ in $Q_i$. Since an upward shift in the rankings can only convert losses to ties to wins, monotonicity in contest rank follows from \Cref{lem:mono-term}.
\end{proof}

Having established the relationship between round rankings and performance scores, the next step is to prove that, even with hindsight, players will always prefer their performance scores to be as high as possible:

\begin{lemma}
\label{lem:mono-rate}
Holding fixed the set of contest rounds in which a player has participated, their current rating is monotonic in each of their past performance scores.
\end{lemma}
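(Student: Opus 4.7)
I would prove the lemma by induction on the number $t$ of rounds in the player's history, establishing the stronger statement that $\mu_t$ is strictly increasing in each of the recorded performances $p_1,\ldots,p_t$ viewed as free parameters, with the round participation schedule $\cH_t$ and the hyperparameters $\beta,\gamma,\rho$ held fixed. The base case $t=0$ is trivial since the initial rating is just the newcomer prior and does not depend on any performance.

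For the inductive step, round $t$ consists of a call to \textbf{diffuse} followed by a call to \textbf{update}, which I would analyze in turn. A preliminary observation, also by induction, is that the weights $w_k = \omega_k/\beta_k^2$ maintained by the algorithm depend only on the hyperparameters and $\cH_t$, and not on performance values: the diffusion factor $\kappa = (1+\gamma^2/\sigma^2)^{-1}$ depends only on $\sigma^2 = (\sum_k w_k)^{-1}$, which inductively has no performance dependence, and the subsequent weight updates are just multiplications by $\kappa$ and $\kappa^{1+\rho}$.

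In \textbf{diffuse}, the Gaussian factor's center is replaced by the convex combination $p_0^{\text{new}} = (w_G\, p_0 + w_L\, \mu_{t-1})/(w_G+w_L)$ with performance-independent coefficients, while the other centers $p_k$ for $1\le k\le t-1$ are left untouched. By the inductive hypothesis, $\mu_{t-1}$ is strictly increasing in each of $p_1,\ldots,p_{t-1}$, so the same is true of $p_0^{\text{new}}$. In \textbf{update}, the new rating $\mu_t$ is the unique zero of $L'$ from \Cref{eq:loss} (with the refreshed $p_0^{\text{new}}$ and the appended $p_t$). Since $L'(s)$ is strictly increasing in $s$ (as already established in \Cref{sec:belief}) and strictly decreasing in each $p_k$ (the Gaussian term $w_0(s-p_0)$ decreases in $p_0$, and each term $\tanh\frac{(s-p_k)\pi}{\beta_k\sqrt{12}}$ decreases in $p_k$), the implicit function theorem yields that the zero $\mu_t$ is strictly increasing in each of $p_0^{\text{new}}, p_1, \ldots, p_t$.

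Composing, for $k<t$ the variable $p_k$ influences $\mu_t$ through two channels: directly via the $k$-th logistic term, and indirectly via $p_0^{\text{new}}$. The crucial, near-automatic point is that both channels push $\mu_t$ in the \emph{same} direction, so the composition remains strictly increasing; for $k=t$ only the direct channel exists. The main obstacle I anticipate is careful bookkeeping of these two channels through the diffuse step. In particular, one must verify that $p_0^{\text{new}}$ is monotone in $\mu_{t-1}$, which it is precisely because the transfer weights $w_G, w_L$ are non-negative and performance-independent, so that the two channels reinforce rather than oppose. Had diffuse updated $p_0$ by some non-monotone rule, or had it transferred weight toward a fixed anchor rather than toward $\mu_{t-1}$, the proof could fail; this sanity check is therefore essential.
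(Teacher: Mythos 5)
Your proposal is correct and takes essentially the same route as the paper's proof: both argue that the zero of $L'$ is monotone in each $p_k$ because $L'$ is increasing in $s$ and decreasing in each $p_k$ with performance-independent weights, and both handle the Gaussian center $p_0$ by observing it is a nonnegative, performance-independent combination involving past ratings that are monotone by induction. The only bookkeeping refinement worth making explicit is that your inductive hypothesis should also carry the monotonicity of the current $p_0$ itself (not just of $\mu_{t-1}$), since $p_0^{\text{new}}$ depends on both; the paper sidesteps this by unrolling $p_0$ as a weighted average of all historical ratings at once.
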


\begin{proof}
The player's rating is given by the zero of $L'$ in \Cref{eq:loss}. The pseudodiffusions of \Cref{sec:skill-drift} modify each of the $\beta_k$ in a manner that does not depend on any of the $p_k$, so they are fixed for our purposes. Hence, $L'$ is monotonically increasing in $s$ and decreasing in each of the $p_k$. Therefore, its zero is monotonically increasing in each of the $p_k$.

This is almost what we wanted to prove, except that $p_0$ is not a performance. Nonetheless, it is a function of the performances: specifically, a weighted average of historical ratings which, using this same lemma as an inductive hypothesis, are themselves monotonic in past performances. By induction, the proof is complete.
\end{proof}

Finally, we conclude that the player's incentives are aligned with optimizing round rankings, or raw scores:

\begin{theorem}[Aligned Incentives]
\label{thm:mono}
Holding fixed the set of contest rounds in which each player has participated, and the historical ratings and relative rankings of all players other than $i$, player $i$'s current rating is monotonic in each of their past rankings.
\end{theorem}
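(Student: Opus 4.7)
The plan is to prove \Cref{thm:mono} by induction on the rounds following the altered ranking, combining \Cref{lem:mono-perf} (counterfactual monotonicity of performance in rank and prior rating) with \Cref{lem:mono-rate} (monotonicity of current rating in past performances). Fix a past round $t_0$ and suppose we improve player $i$'s rank at $t_0$, keeping fixed (i) the historical ratings and relative rankings of every player other than $i$ and (ii) $i$'s rankings in every round other than $t_0$. The claim to establish is that $i$'s rating at every round $t\ge t_0$ is weakly higher, strictly so when the rank change is strict.

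For the base case at round $t_0$, all other players' prior ratings entering the round are unchanged by hypothesis, and the relative ranking among the other players is unchanged, so the hypotheses of \Cref{lem:mono-perf} hold; improving $i$'s rank therefore yields a weakly larger performance $p_{i,t_0}$. Invoking \Cref{lem:mono-rate} then gives a weakly larger posterior rating $\mu_{i,t_0}$. For the inductive step, assume $\mu_{i,t}$ has been shown weakly larger for some $t\ge t_0$, and consider the next round $t'$ in which $i$ participates. The pseudodiffusion of \Cref{sec:skill-drift} rescales each $\beta_k$ (and the multiplicities $\omega_k$) independently of the raw performance values $p_k$, and by the \emph{rating preservation} clause of \Cref{thm:diffuse-prop} it preserves $\mu$; hence $i$'s prior rating entering round $t'$ is monotonic in $\mu_{i,t}$. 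Applying \Cref{lem:mono-perf} at round $t'$ (with $i$'s rank and other players' prior ratings held fixed) gives a weakly larger $p_{i,t'}$, and \Cref{lem:mono-rate} then yields a weakly larger $\mu_{i,t'}$, completing the induction.

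The only subtle step is checking that the ``held fixed'' conditions of the two lemmas remain mutually compatible along the induction. \Cref{lem:mono-perf} at round $t'$ needs the other participants' prior ratings and their mutual relative rankings to be unchanged; both are guaranteed by the outer hypothesis. \Cref{lem:mono-rate} needs the set of rounds in which $i$ participated to be fixed, which is also hypothesized, and already absorbs (through the internal induction on the pseudo-prior term $p_{i,0}$ used in its own proof) any pseudodiffusion that recenters a Gaussian factor at $\mu_{i,t}$: because $\mu_{i,t}$ is weakly larger by induction, the updated $p_{i,0}$ moves weakly in the correct direction as well. Since performances at rounds before $t_0$ are unaltered, $p_{i,t_0}$ is (strictly) larger, and every subsequent $p_{i,s}$ is weakly larger by the induction, a final application of \Cref{lem:mono-rate} at the present round delivers the claimed monotonicity of $i$'s current rating in the altered past ranking. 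The main obstacle is purely bookkeeping: verifying that no cross-player feedback sneaks in when $i$'s performance changes, which is precisely what fixing the other players' historical ratings rules out.
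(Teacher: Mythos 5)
Your proposal is correct and follows essentially the same route as the paper: an induction that alternates \Cref{lem:mono-perf} (improved rank or prior rating $\Rightarrow$ larger performance score) with \Cref{lem:mono-rate} (larger performance scores $\Rightarrow$ larger rating), propagating the increase forward round by round. Your additional bookkeeping about the pseudodiffusion recentering the Gaussian factor at the (monotonically larger) rating $\mu_{i,t}$ is a point the paper delegates to the \emph{aligned incentives} clause of \Cref{thm:diffuse-prop}, so nothing is missing.
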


\begin{proof}
Choose any contest round in player $i$'s history, and consider improving player $i$'s rank in that round while holding everything else fixed. It suffices to show that player $i$'s current rating would necessarily increase as a result.

In the altered round, by \Cref{lem:mono-perf}, $p_i$ is increased; and by \Cref{lem:mono-rate}, player $i$'s post-round rating is increased. By \Cref{lem:mono-perf} again, this increases player $i$'s performance score in the following round. Proceeding inductively, we find that performance scores and ratings from this point onward are all increased.
\end{proof}

In the special cases of Elo-MM$\chi$ or Elo-MMR($\infty$), the rating system is ``memoryless'': the only data retained for each player are the current rating $\mu_{i,t}$ and uncertainty $\sigma_{i,t}$; detailed performance history is not saved. In this setting, we present a natural monotonicity theorem. A similar theorem was stated for the Codeforces system in \cite{Codeforces}, but no proofs were given.

\begin{theorem}[Memoryless Monotonicity Theorem]
In either the Elo-MM$\chi$ or Elo-MMR($\infty$) system, suppose $i$ and $j$ are two participants of round $t$. Suppose that the ratings and corresponding uncertainties satisfy $\mu_{i,t-1} \ge \mu_{j,t-1},\; \sigma_{i,t-1} = \sigma_{j,t-1}$. Then, $\sigma_{i,t} = \sigma_{j,t}$. Furthermore, if $i \succ j$ in round $t$, then $\mu_{i,t} > \mu_{j,t}$. On the other hand, if $j \succ i$ in round $t$, then $\mu_{j,t} - \mu_{j,t-1} > \mu_{i,t} - \mu_{i,t-1}$.
\end{theorem}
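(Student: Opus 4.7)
The plan is to reduce both conclusions to monotonicity properties of a common ``update function'' $F(\mu,p)$ that maps a prior rating $\mu$ and phase-one performance estimate $p$ to the posterior rating. Since $i$ and $j$ enter round $t$ with equal uncertainty $\sigma_{i,t-1}=\sigma_{j,t-1}$ and share all round parameters ($\gamma_t$, $\beta_t$, and the set of other participants), the pseudodiffusion and phase-two update act identically on both players, so $F$ is the same function for each. I would first dispatch the uncertainty claim: in both Elo-MM$\chi$ and Elo-MMR($\infty$), the memoryless posterior uncertainty is given by the closed form $1/\sigma_{k,t}^2 = 1/(\sigma_{k,t-1}^2+\gamma_t^2) + 1/\beta_t^2$, so equal inputs immediately yield $\sigma_{i,t}=\sigma_{j,t}$.

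For the rating, I would invoke rating preservation to note $\mu^\pi_{k,t}=\mu_{k,t-1}$ and set $\tilde\sigma^2 := \sigma_{k,t-1}^2+\gamma_t^2$. Then $\mu_{k,t}=F(\mu_{k,t-1},p_k)$, where $F(\mu,p)$ is the unique zero in $s$ of
\[
L'(s;\mu,p) \;=\; \frac{s-\mu}{\tilde\sigma^2} + g(s-p),
\]
with $g(x)=x/\beta_t^2$ in Elo-MM$\chi$ and $g$ a scaled $\tanh$ in Elo-MMR($\infty$); in either case $g'>0$ and $L'$ is strictly increasing in $s$. Implicit differentiation then yields $\partial_p F = g'/L'' > 0$ and $\partial_\mu F = (1/\tilde\sigma^2)/L'' \in (0,1)$, the strict upper bound coming from $L''=1/\tilde\sigma^2+g'>1/\tilde\sigma^2$.

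Case 1 ($i\succ j$) is then immediate from the earlier theorem that $i\succ j$ implies $p_i>p_j$: strict monotonicity of $F$ in each argument, combined with $\mu_{i,t-1}\geq\mu_{j,t-1}$, gives $\mu_{i,t}>\mu_{j,t}$. Case 2 ($j\succ i$) is subtler: now $p_j>p_i$ while $\mu_{j,t-1}\leq\mu_{i,t-1}$. I would decompose $(\mu_{j,t}-\mu_{j,t-1})-(\mu_{i,t}-\mu_{i,t-1})$ by inserting and subtracting $F(\mu_{i,t-1},p_j)$, then apply the mean value theorem in each variable. The $\mu$-increment $F(\mu_{j,t-1},p_j)-F(\mu_{i,t-1},p_j)$ exceeds $\mu_{j,t-1}-\mu_{i,t-1}$ whenever the priors differ (because $\partial_\mu F<1$ gives slope less than one), while the $p$-increment $F(\mu_{i,t-1},p_j)-F(\mu_{i,t-1},p_i)$ is strictly positive (because $\partial_p F>0$); subtracting the nonpositive quantity $\mu_{j,t-1}-\mu_{i,t-1}$ then leaves a strictly positive remainder.

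The main obstacle is establishing $\partial_\mu F<1$ strictly for Elo-MMR($\infty$); this is a short implicit-differentiation calculation, but without strictness the Case-2 decomposition would collapse to a weak inequality on rating changes. A secondary bookkeeping concern is justifying that $F$ really is the same function for both players, which depends on the pseudodiffusion acting identically under matched prior uncertainties and on rating preservation ensuring $\mu^\pi_{k,t}=\mu_{k,t-1}$.
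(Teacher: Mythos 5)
Your proposal is correct, but it takes a genuinely different route from the paper. The paper's proof is a reduction: it invokes memorylessness to swap each player's actual history for a fictitious one compatible with their current $(\mu,\sigma)$ (e.g., $i$ always performing at $\mu_{i,t-1}$ and $j$ at $\mu_{j,t-1}$), then applies the Aligned Incentives theorem (\Cref{thm:mono}) directly for the case $i\succ j$, and combines it with a translation-invariance argument (shifting all of $j$'s fictitious performances up by $\mu_{i,t-1}-\mu_{j,t-1}$) for the case $j\succ i$. You instead analyze the single-round update map $F(\mu,p)$ directly, establishing $\partial_p F>0$ and $0<\partial_\mu F<1$ by implicit differentiation of $L'$, and then handle the second case with a two-term decomposition and the mean value theorem. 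Both arguments are sound: your Case 2 works because $(\partial_\mu F-1)(\mu_{j,t-1}-\mu_{i,t-1})\ge 0$ while the $p$-increment is strictly positive (using $p_j>p_i$ from the paper's unnumbered standings-monotonicity theorem, which holds pointwise via $Q_j-Q_i>0$ irrespective of the two priors, so citing it here is legitimate). The trade-off: the paper's route is shorter and reuses proven machinery, but rests on informal appeals to translation-invariance and the admissibility of substituting fictitious histories, neither of which is formally established elsewhere in the paper; your route is self-contained, local to round $t$, and makes explicit the quantitative mechanism --- the posterior rating is a strict contraction in the prior rating --- that is the real reason the lower-rated winner gains more. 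Your flagged obstacle (strictness of $\partial_\mu F<1$ for Elo-MMR($\infty$)) is not a genuine difficulty, since $L''=1/\tilde\sigma^2+g'$ with $g'>0$ everywhere for the scaled $\tanh$.
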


\begin{proof}
The new contest round will add a rating perturbation with variance $\gamma_t^2$, followed by a new performance with variance $\beta_t^2$. As a result,
\[\sigma_{i,t}
= \left( \frac{1}{\sigma_{i,t-1}^2 + \gamma_t^2} + \frac{1}{\beta_t^2} \right)^{-\frac 12}
= \left( \frac{1}{\sigma_{j,t-1}^2 + \gamma_t^2} + \frac{1}{\beta_t^2} \right)^{-\frac 12}
= \sigma_{j,t}.\]

The remaining conclusions are consequences of three properties: memorylessness, aligned incentives (\Cref{thm:mono}), and translation-invariance (ratings, skills, and performances are quantified on a common interval scale relative to one another).

Since the Elo-MM$\chi$ or Elo-MMR($\infty$) systems are memoryless, we may replace the initial prior and performance histories of players with any alternate histories of our choosing, as long as our choice is compatible with their current rating and uncertainty. For example, both $i$ and $j$ can be considered to have participated in the same set of rounds, with $i$ always performing at $\mu_{i,t-1}$. and $j$ always performing at $\mu_{j,t-1}$. Round $t$ is unchanged.

Suppose $i \succ j$. Since $i$'s historical performances are all equal or stronger than $j$'s, \Cref{thm:mono} implies $\mu_{i,t} > \mu_{j,t}$.

Suppose $j \succ i$. By translation-invariance, if we shift each of $j$'s performances, up to round $t$ and including the initial prior, upward by $\mu_{i,t-1} - \mu_{j,t-1}$, the rating changes between rounds will be unaffected. Players $i$ and $j$ now have identical histories, except that we still have $j\succ i$ at round $t$. Therefore, $\mu_{j,t-1} = \mu_{i,t-1}$ and, by \Cref{thm:mono}, $\mu_{j,t} > \mu_{i,t}$. Subtracting the equation from the inequality proves the second conclusion.
\end{proof}

\subsection{Robust response}
\label{sec:robust}

Another desirable property in many settings is robustness: a player's rating should not change too much in response to any one contest, no matter how extreme their performance. The Codeforces and TrueSkill systems lack this property, allowing for unbounded rating changes. TopCoder achieves robustness by clamping any changes that exceed a cap, which is initially high for new players but decreases with experience.

When $\rho>0$, Elo-MMR($\rho$) achieves robustness in a natural, smoother manner. It comes out of the interplay between Gaussian and logistic factors in the posterior; $\rho>0$ ensures that the Gaussian contribution doesn't vanish. Recall the notation used to describe the general posterior in \Cref{eq:posterior,eq:loss}, enhanced with the fractional multiplicities $\omega_k$ from \Cref{sec:pseudodiffusion}.

\begin{theorem}
\label{thm:robust}
In the Elo-MMR($\rho$) rating system, let
\[\Delta_{+} := \lim_{p_{t}\rightarrow+\infty} \mu_{t}-\mu_{t-1}, \quad \Delta_{-} := \lim_{p_{t}\rightarrow-\infty}\mu_{t-1}-\mu_{t}.
\]
Then,
\[\frac{\pi}{\beta_t\sqrt 3}
\left(\frac{1}{\beta_0^2} + \frac{\pi^2}{6}\sum_{k\in\cH_{t-1}}\frac{\omega_k}{\beta_k^2} \right)^{-1}
\le \Delta_{\pm} \le \frac{\pi\beta_0^2}{\beta_t\sqrt 3}.\]
\end{theorem}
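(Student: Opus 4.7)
The plan is to reduce both bounds to a single mean-value-theorem computation on the derivative of the negative log-prior. Write
\[L'_{t-1}(s) := \frac{s-p_0}{\beta_0^2} + \sum_{k\in\cH_{t-1}}\frac{\omega_k\pi}{\beta_k\sqrt{3}}\tanh\frac{(s-p_k)\pi}{\beta_k\sqrt{12}}\]
for the prior contribution going into round $t$ --- i.e.\ \Cref{eq:loss} with the multiplicities from \Cref{sec:pseudodiffusion} inserted, but without the round-$t$ term. By the \emph{rating preservation} half of \Cref{thm:diffuse-prop}, the prior rating at round $t$ is $\mu_{t-1}$, so $L'_{t-1}(\mu_{t-1})=0$. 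After absorbing the round-$t$ observation with value $p_t$ and scale $\beta_t$, the new rating $\mu_t$ is the unique zero of
\[L'_t(s) = L'_{t-1}(s) + \frac{\pi}{\beta_t\sqrt{3}}\tanh\frac{(s-p_t)\pi}{\beta_t\sqrt{12}}.\]

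Next I would take the limits $p_t\to\pm\infty$. The Gaussian summand in $L'_{t-1}$ makes it strictly increasing and unbounded above and below, so the equations $L'_{t-1}(s)=\pm\pi/(\beta_t\sqrt{3})$ each admit a unique finite root; call them $\mu^{\ast}_\pm$. Because $\tanh\bigl((s-p_t)\pi/(\beta_t\sqrt{12})\bigr)\to \mp 1$ uniformly on any compact $s$-interval as $p_t\to\pm\infty$, a sandwich argument on the strictly increasing equation $L'_t(\mu_t)=0$ forces $\mu_t\to\mu^{\ast}_\pm$. Thus $\Delta_+=\mu^{\ast}_+-\mu_{t-1}$ and $\Delta_-=\mu_{t-1}-\mu^{\ast}_-$ are both finite, and each satisfies $L'_{t-1}(\mu^{\ast}_\pm)-L'_{t-1}(\mu_{t-1}) = \pm\pi/(\beta_t\sqrt{3})$.

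I would then apply the mean value theorem to $L'_{t-1}$ on the interval joining $\mu_{t-1}$ and $\mu^{\ast}_\pm$, producing intermediate points $\xi_\pm$ with
\[L''_{t-1}(\xi_\pm)\,\Delta_\pm = \frac{\pi}{\beta_t\sqrt{3}}, \qquad L''_{t-1}(\xi) = \frac{1}{\beta_0^2} + \frac{\pi^2}{6}\sum_{k\in\cH_{t-1}}\frac{\omega_k}{\beta_k^2}\sech^2\!\frac{(\xi-p_k)\pi}{\beta_k\sqrt{12}}.\]
Since $\sech^2$ takes values in $(0,1]$, this yields the sandwich
\[\frac{1}{\beta_0^2} \;\le\; L''_{t-1}(\xi_\pm) \;\le\; \frac{1}{\beta_0^2}+\frac{\pi^2}{6}\sum_{k\in\cH_{t-1}}\frac{\omega_k}{\beta_k^2},\]
and rearranging the previous display gives exactly the upper and lower bounds asserted by the theorem.

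The only delicate step is the limiting argument --- one has to rule out $\mu_t$ escaping to infinity in lockstep with $p_t$. This is precisely where the persistent Gaussian term $(s-p_0)/\beta_0^2$ earns its keep: it grows linearly in $s$, whereas the round-$t$ tanh term is bounded, so the equation $L'_t(\mu_t)=0$ cannot be satisfied at arbitrarily large $|\mu_t|$ and the roots $\mu_t$ stay in a compact set uniformly in $p_t$. Note this is exactly why the theorem requires $\rho>0$: if Elo-MMR($0$) were used, subsequent diffusions could let the Gaussian weight $1/\beta_0^2$ be erased relative to heavy logistic tails, and the bounded-increment conclusion would fail. Once compactness is in hand, continuity and strict monotonicity of $L'_{t-1}$ pin down the limits $\mu^{\ast}_\pm$, and the MVT plus $\sech^2$ bound finish the proof.
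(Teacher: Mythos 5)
Your proof is correct and follows essentially the same route as the paper's: bound $L''$ between $\frac{1}{\beta_0^2}$ and $\frac{1}{\beta_0^2}+\frac{\pi^2}{6}\sum_{k\in\cH_{t-1}}\frac{\omega_k}{\beta_k^2}$, observe that the limiting round-$t$ term shifts $L'(\mu_{t-1})$ to $\mp\frac{\pi}{\beta_t\sqrt 3}$, and divide, which is the mean value theorem in disguise. You merely make explicit two steps the paper leaves implicit --- the MVT itself and the compactness argument ruling out $\mu_t$ escaping to infinity along with $p_t$ --- both of which are sound.
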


\begin{proof}
Using the fact that $0 < \frac{d}{dx}\tanh(x) \le 1$, differentiating \Cref{eq:loss} yields
\[\frac{1}{\beta_0^2} \le L''(s)
\le \frac{1}{\beta_0^2} + \frac{\pi^2}{6}\sum_{k\in\cH_{t-1}}\frac{\omega_k}{\beta_k^2}.\]

For every $s\in\mathbb R$, in the limit as $p_t\rightarrow\pm\infty$, the new term corresponding to the performance at round $t$ will increase $L'(s)$ by $\mp\frac{\pi}{\beta_t\sqrt 3}$. Since $\mu_{t-1}$ was a zero of $L'$ without this new term, we now have
$L'(\mu_{t-1}) \rightarrow \mp\frac{\pi}{\beta_t\sqrt 3}.$ Dividing by the former inequalities yields the desired result.
\end{proof}

The proof reveals that the magnitude of $\Delta_{\pm}$ depends inversely on that of $L''$ in the vicinity of the current rating, which in turn is related to the derivative of the $\tanh$ terms. If a player's performances vary wildly, then most of the $\tanh$ terms will be in their tails, which contribute small derivatives, enabling larger rating changes. Conversely, the $\tanh$ terms of a player with a very consistent rating history will contribute large derivatives, so the bound on their rating change will be small.

Thus, Elo-MMR($\rho$) naturally caps the rating change of all players, and puts a smaller cap on the rating change of consistent players. The cap will increase after an extreme performance, providing a similar ``momentum'' to the TopCoder and Glicko-2 systems, but without sacrificing aligned incentives (\Cref{thm:mono}).

By comparing against \Cref{eq:variance}, we see that the lower bound in \Cref{thm:robust} is on the order of $\sigma_t^2/\beta_t$, while the upper bound is on the order of $\beta_0^2/\beta_t$. As a result, the momentum effect is more pronounced when $\beta_0$ is much larger than $\sigma_t$. Since the decay step increases $\beta_0$ while the transfer step decreases it, this occurs when the transfer rate $\rho$ is comparatively small. Thus, $\rho$ can be chosen in inverse proportion to the desired strength of momentum.

\subsection{Runtime analysis and optimizations}
\label{sec:runtime}
Let's look at the computation time needed to process a round with participant set $\mathcal P$, where we again omit the round subscript. Each player $i$ has a participation history $\cH_i$.

Estimating $P_i$ entails finding the zero of a monotonic function with $O(|\mathcal P|)$ terms, and then obtaining the rating $\mu_i$ entails finding the zero of another monotonic function with $O(|\cH_i|)$ terms. Using the Illinois or Newton methods, solving these equations to precision $\epsilon$ takes $O(\log\log\frac 1\epsilon)$ iterations. As a result, the total runtime needed to process one round of competition is
\[O\left(\sum_{i\in\mathcal P}(|\mathcal P| + |\cH_i|) \log\log\frac 1\epsilon\right).\]
This complexity is more than adequate for Codeforces-style competitions with thousands of contestants and history lengths up to a few hundred. Indeed, we were able to process the entire history of Codeforces on a small laptop in less than half an hour. Nonetheless, it may be cost-prohibitive in truly massive settings, where $|\mathcal P|$ or $|\cH_i|$ number in the millions. Fortunately, it turns out that both functions may be compressed down to a bounded number of terms, with negligible loss of precision.

\paragraph{Adaptive subsampling}
In \Cref{sec:bayes_model}, we used Doob's consistency theorem to argue that our estimate for $P_i$ is consistent. Specifically, we saw that $O(1/\epsilon^2)$ opponents are needed to get the typical error below $\epsilon$. Thus, we can subsample the set of opponents to include in the estimation, omitting the rest. Random sampling is one approach. A more efficient approach chooses a fixed number of opponents whose ratings are closest to that of player $i$, as these are more likely to provide informative match-ups. On the other hand, if the setting requires aligned incentives to hold exactly, then one must avoid choosing different opponents for each player.

\paragraph{History compression}
Similarly, it's possible to bound the number of stored factors in the posterior. Our skill-evolution algorithm decays the weights of old performances at an exponential rate. Thus, the contributions of all but the most recent $O(\log\frac 1\epsilon)$ terms are negligible. Rather than erase the older logistic terms outright, we recommend replacing them with moment-matched Gaussian terms, similar to the transfers in \Cref{sec:skill-drift} with $\kappa=0$. Since Gaussians compose easily, a single term can then summarize an arbitrarily long prefix of the history.

Substituting $1/\epsilon^2$ and $\log\frac 1\epsilon$ for $|\cP|$ and $|\cH_i|$, respectively, the runtime of Elo-MMR with both optimizations becomes
\[O\left(\frac {|\mathcal P|}{\epsilon^2} \log\log\frac 1\epsilon\right).\]

Finally, we note that the algorithm is embarrassingly parallel, with each player able to solve its equations independently. The threads can read the same global data structures, so each additional thread only contributes $O(1)$ memory overhead.

\section{Experiments}
\label{sec:experiments}
In this section, we compare various rating systems on real-world datasets, mined from several sources that will be described in \Cref{sec:datasets}. The metrics are runtime and predictive accuracy, as described in \Cref{sec:metrics}.

We compare Elo-MM$\chi$ and Elo-MMR($\rho$) against the industry-tested rating systems of Codeforces and TopCoder. For a fairer comparison, we hand-coded efficient versions of all four algorithms in the safe subset of Rust, parellelized using the Rayon crate; as such, the Rust compiler verifies that they contain no data races~\cite{stone2017rayon}. Our implementation of Elo-MMR($\rho$) makes use of the optimizations in \Cref{sec:runtime}, bounding both the number of sampled opponents and the history length by 500. In addition, we test the improved TrueSkill algorithm of \cite{NS10}, basing our code on an open-source implementation of the same algorithm. The inherent seqentiality of its message-passing procedure prevented us from parallelizing it.

\paragraph{Hyperparameter search}
To ensure fair comparisons, we ran a separate grid search for each triple of algorithm, dataset, and metric, over all of the algorithm's hyperparameters. The hyperparameter set that performed best on the first 10\% of the dataset, was then used to test the algorithm on the remaining 90\% of the dataset. 

The experiments were run on a 2.0 GHz 24-core Skylake machine with 24 GB of memory. Implementations of all rating systems, hyperparameters, datasets, and additional processing used in our experiments can be found at \url{https://github.com/EbTech/EloR/}.


\subsection{Datasets}
\label{sec:datasets}

\begin{table}[t]
\begin{tabular}{l|l|l}
\hline
\textbf{Dataset} & \textbf{\# contests} & \textbf{avg. \# participants / contest} \\ \hline
Codeforces       & 1087                & 2999                                     \\ 
TopCoder         & 2023                & 403                                   \\ 
Reddit           & 1000                & 20                                       \\
Synthetic        & 50                  & 2500     \\ \hline
\end{tabular}
    \caption{Summary of test datasets.}
    \label{tab:dataset-summary}
    \vspace{-1.2em}
\end{table}

Due to the scarcity of public domain datasets for rating systems, we mined three datasets to analyze the effectiveness of our system. The datasets were mined using data from each source website's inception up to October 12th, 2020. We also created a synthetic dataset to test our system's performance when the data generating process matches our theoretical model. Summary statistics of the datasets are presented in \Cref{tab:dataset-summary}.

\paragraph{Codeforces contest history}
This dataset contains the current entire history of rated contests ever run on CodeForces.com, the dominant platform for online programming competitions. The CodeForces platform has over 850K users, over 300K of whom are rated, and has hosted over 1000 contests to date. Each contest has a couple thousand competitors on average. A typical contest takes 2 to 3 hours and contains 5 to 8 problems. Players are ranked by total points, with more points typically awarded for tougher problems and for early solves. They may also attempt to ``hack'' one another's submissions for bonus points, identifying test cases that break their solutions. 
\looseness=-1

\paragraph{TopCoder contest history}
This dataset contains the current entire history of algorithm contests ever run on the TopCoder.com. TopCoder is a predecessor to Codeforces, with over 1.4 million total users and a long history as a pioneering platform for programming contests. It hosts a variety of contest types, including over 2000 algorithm contests to date. The scoring system is similar to Codeforces, but its rounds are shorter: typically 75 minutes with 3 problems.

\paragraph{SubRedditSimulator threads}
This dataset contains data scraped from the current top-1000 most upvoted threads on the website \url{reddit.com/r/SubredditSimulator/}. Reddit is a social news aggregation website with over 300 million users. The site itself is broken down into sub-sites called subreddits. Users then post and comment to the subreddits, where the posts and comments receive votes from other users. In the subreddit SubredditSimulator, users are language generation bots trained on text from other subreddits. Automated posts are made by these bots to SubredditSimulator every 3 minutes, and real users of Reddit vote on the best bot. Each post (and its associated comments) can thus be interpreted as a round of competition between the bots who commented. 

\paragraph{Synthetic data}
This dataset contains 10K players, with skills and performances generated according to the Gaussian generative model in \Cref{sec:bayes_model}. Players' initial skills are drawn i.i.d. with mean $1500$ and variance $300$. Players compete in all rounds, and are ranked according to independent performances with variance $200$. Between rounds, we add i.i.d. Gaussian increments with variance $35$ to each of their skills.

\subsection{Evaluation metrics}
\label{sec:metrics}
To compare the different algorithms, we define two measures of predictive accuracy. Each metric will be defined on individual contestants in each round, and then averaged:
\[\mathrm{\bf aggregate(metric)} := \frac{\sum_t \sum_{i\in\mathcal P_t} \mathrm{\bf metric}(i,t)}{\sum_t |\mathcal P_t|}.\]

\paragraph{Pair inversion metric~\cite{HMG06}}
Our first metric computes the fraction of opponents against whom our ratings predict the correct pairwise result, defined as the higher-rated player either winning or tying: 
\[\mathrm{\bf pair\_inversion}(i,t) := \frac{\text{\# correctly predicted matchups}}{|\mathcal P_t|-1} \times 100\%.\]
This metric was used in the evaluation of TrueSkill~\cite{HMG06}.

\paragraph{Rank deviation}
Our second metric compares the rankings with the total ordering that would be obtained by sorting players according to their prior rating. The penalty is proportional to how much these ranks differ for player $i$:
\[\mathrm{\bf rank\_deviation}(i,t) := \frac{|\text{actual\_rank} - \text{predicted\_rank}|}{|\mathcal P_t|-1} \times 100\%.\]
In the event of ties, among the ranks within the tied range, we use the one that comes closest to the rating-based prediction.


\subsection{Empirical results}
\begin{table*}
\begin{tabular}{l|ll|ll|ll|ll|ll}
 \hline
\multirow{2}{*}{\textbf{Dataset}} &
  \multicolumn{2}{l|}{\textbf{Codeforces}} &
  \multicolumn{2}{l|}{\textbf{TopCoder}} &
  \multicolumn{2}{l|}{\textbf{TrueSkill}} &
  \multicolumn{2}{l|}{\textbf{Elo-MM$\boldsymbol\chi$}} & 
  \multicolumn{2}{l}{\textbf{Elo-MMR($\boldsymbol\rho$)}} \\ \cline{2-11}
&
  pair inv. &
  rank dev. &
  pair inv. &
  rank dev. &
  pair inv. &
  rank dev. &
  pair inv. &
  rank dev. &
  pair inv. &
  rank dev. \\ \hline
Codeforces & 78.3\% & 14.9\% & 78.5\% & 15.1\% & 61.7\% & 25.4\% & 78.5\% & 14.8\% & {\bf 78.6}\% & {\bf 14.7}\% \\ 
TopCoder  & 72.6\%     & 18.5\%     & 72.3\% & 18.7\%  & 68.7\% & 20.9\% & 73.0\% & 18.3\% & {\bf 73.1}\% & {\bf 18.2}\% \\ 
Reddit     & 61.5\%     & 27.3\%     & 61.4\% & 27.4\% & 61.5\% & {\bf 27.2}\% & 61.6\% & 27.3\% & {\bf 61.6\%} & 27.3\% \\ 
Synthetic  & {\bf 81.7\%}     & 12.9\%     & {\bf 81.7}\% & {\bf 12.8}\% & 81.3\% & 13.1\% & {\bf 81.7}\% & {\bf 12.8}\% & {\bf 81.7\%} & {\bf 12.8\%} \\ \hline
\end{tabular}
\caption{Performance of each rating system on the pairwise inversion and rank deviation metrics. Bolded entries denote the best performances (highest pair inv. or lowest rank dev.) on each metric and dataset.}
\label{tbl:metric-results}
\vspace{-1.2em}
\end{table*}

\begin{table}
\begin{tabular}{l|lllll}
\hline
\textbf{Dataset} & \textbf{CF} & \textbf{TC} & \textbf{TS} & \textbf{Elo-MM$\boldsymbol\chi$} & \textbf{Elo-MMR($\boldsymbol\rho$)} \\ \hline
Codeforces & 212.9 & 72.5 & 67.2 & {\bf 31.4} & 35.4\\
TopCoder   & 9.60 & {\bf 4.25} & 16.8 & 7.00 & 7.52\\
Reddit     & 1.19  & 1.14 & {\bf 0.44} & 1.14 & 1.42 \\
Synthetic  & 3.26  & 1.00 & 2.93 & {\bf 0.81} & 0.85 \\ \hline
\end{tabular}
\caption{Total compute time over entire dataset, in seconds.}
\label{tbl:time-results}
\vspace{-1.2em}
\end{table}

Recall that Elo-MM$\chi$ has a Gaussian performance model, matching the modeling assumptions of TopCoder and TrueSkill. Elo-MMR($\rho$), on the other hand, has a logistic performance model, matching the modeling assumptions of Codeforces and Glicko. While $\rho$ was included in the hyperparameter search, in practice we found that all values between $0$ and $1$ produce very similar results.

To ensure that errors due to the unknown skills of new players don't dominate our metrics, we excluded players who had competed in less than 5 total contests. In most of the datasets, this reduced the performance of our method relative to the others, as our method seems to converge more accurately. Despite this, we see in \Cref{tbl:metric-results} that both versions of Elo-MMR outperform the other rating systems in both the pairwise inversion metric and the ranking deviation metric.
\looseness=-1

We highlight a few key observations. First, significant performance gains are observed on the Codeforces and TopCoder datasets, despite these platforms' rating systems having been designed specifically for their needs. Our gains are smallest on the synthetic dataset, for which all algorithms perform similarly. This might be in part due to the close correspondence between the generative process and the assumptions of these rating systems. Furthermore, the synthetic players compete in all rounds, enabling the system to converge to near-optimal ratings for every player. Finally, the improved TrueSkill performed well below our expectations, despite our best efforts to improve it; we suspect that the message-passing algorithm breaks down in contests with a large number of distinct ranks. To our knowledge, we are the first to present experiments with TrueSkill on contests where the number of distinct ranks is in the hundreds or thousands. In preliminary experiments, TrueSkill and Elo-MMR score about equally when the number of ranks is less than about 60.

Now, we turn our attention to \Cref{tbl:time-results}, which showcases the computational efficiency of Elo-MMR. On smaller datasets, it performs comparably to the Codeforces and TopCoder algorithms. However, the latter suffer from a quadratic time dependency on the number of contestants; as a result, Elo-MMR outperforms them by almost an order of magnitude on the larger Codeforces dataset.

Finally, in comparisons between the two Elo-MMR variants, we note that while Elo-MMR($\rho$) is more accurate, Elo-MM$\chi$ is always faster. This has to do with the skill drift modeling described in \Cref{sec:skill-drift}, as every update in Elo-MMR($\rho$) must process $O(\log\frac 1\epsilon)$ terms of a player's competition history.

\section{Conclusions}
This paper introduces the Elo-MMR rating system, which is in part a generalization of the two-player Glicko system, allowing an unbounded number of players. By developing a Bayesian model and taking the limit as the number of participants goes to infinity, we obtained simple, human-interpretable rating update formulas. Furthermore, we saw that the algorithm is asymptotically fast, embarrassingly parallel, robust to extreme performances, and satisfies the important \emph{aligned incentives} property. To our knowledge, our system is the first to rigorously prove all these properties in a setting with more than two individually ranked players. In terms of practical performance, we saw that it outperforms existing industry systems in both prediction accuracy and computation speed.

This work can be extended in several directions. First, the choices we made in modeling ties, pseudodiffusions, and opponent subsampling are by no means the only possibilities consistent with our Bayesian model of skills and performances. Second, one may obtain better results by fitting the performance and skill evolution models to application-specific data.

Another useful extension would be to team competitions. While it's no longer straightforward to infer precise estimates of an individual's performance, Elo-MM$\chi$ can simply be applied at the team level. To make this useful in settings where players may form new teams in each round, we must model teams in terms of their individual members. In the case where a team's performance is modeled as the sum of its members' independent Gaussian contributions, elementary facts about multivariate Gaussian distributions enable posterior skill inferences at the individual level. Generalizing this approach remains an open challenge.


Over the past decade, online competition communities such as Codeforces have grown exponentially. As such, considerable work has gone into engineering scalable and reliable rating systems. Unfortunately, many of these systems have not been rigorously analyzed in the academic community. We hope that our paper and open-source release will open new explorations in this area.



\section*{Acknowledgements}
The authors are indebted to Daniel Sleator and Danica J. Sutherland for initial discussions that helped inspire this work, and to Nikita Gaevoy for the open-source improved TrueSkill upon which our implementation is based. Experiments in this paper were funded by a Google Cloud Research Grant. The second author is supported by a VMWare Fellowship and the Natural Sciences and Engineering Research Council of Canada.
\looseness=-1

\appendix
\section*{Appendix}
\begingroup
\def\thetheorem{\ref{lem:decrease}}
\begin{lemma}
If $f_i$ is continuously differentiable and log-concave, then the functions $l_i,d_i,v_i$ are continuous, strictly decreasing, and
\[l_i(p) < d_i(p) < v_i(p) \text{ for all }p.\]
\end{lemma}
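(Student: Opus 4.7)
The plan is to handle the three claims (continuity, strict monotonicity, ordering) in a sequence that lets later claims reuse earlier ones. Continuity is essentially immediate: $f_i$ is continuously differentiable and strictly positive on the (interior of its) domain, and $F_i$ is its integral, so $F_i$ is $C^1$ with $0<F_i<1$ on the interior. Thus all three quotients defining $l_i, d_i, v_i$ are continuous there. The function $d_i = (\ln f_i)'$ is also strictly decreasing because $\ln f_i$ is strictly concave by the definition of log-concavity applied to $f_i$.

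The crux is establishing the chain $l_i(p) < d_i(p) < v_i(p)$; the strict monotonicity of $v_i$ and $l_i$ will then fall out almost for free. For $d_i(p) < v_i(p)$, I would unfold the definitions and observe that it is equivalent to $f_i'(p)\, F_i(p) < f_i(p)^2$. To produce this inequality, use the already-known strict monotonicity of $d_i$: for every $x<p$ in the domain,
\[
\frac{f_i'(x)}{f_i(x)} \;>\; \frac{f_i'(p)}{f_i(p)}.
\]
Multiplying through by $f_i(x)>0$ and integrating over $(-\infty,p]$ yields
\[
\int_{-\infty}^{p} f_i'(x)\,\dx \;>\; \frac{f_i'(p)}{f_i(p)} \int_{-\infty}^{p} f_i(x)\,\dx,
\]
which simplifies to $f_i(p) > \frac{f_i'(p)}{f_i(p)} F_i(p)$, i.e.\ $v_i(p) > d_i(p)$. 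The strict inequality survives the integration because strict log-concavity of $f_i$ gives a strict pointwise inequality on a set of positive measure.

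By the symmetric argument on the right tail, using $\frac{f_i'(x)}{f_i(x)} < \frac{f_i'(p)}{f_i(p)}$ for $x>p$ and integrating over $[p,\infty)$, one obtains $-f_i(p) < \frac{f_i'(p)}{f_i(p)}(1-F_i(p))$, i.e.\ $l_i(p) < d_i(p)$. This completes the ordering claim.

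Finally, strict monotonicity of $v_i$ and $l_i$ is a direct consequence. Differentiating $v_i(p) = f_i(p)/F_i(p)$ gives $v_i'(p) = (f_i'(p) F_i(p) - f_i(p)^2)/F_i(p)^2$, whose numerator is exactly the quantity shown to be negative in the $d_i<v_i$ argument. Similarly, $l_i'(p) = (-f_i'(p)(1-F_i(p)) - f_i(p)^2)/(1-F_i(p))^2$, whose numerator is negative by the $l_i<d_i$ argument. The only mild technical point (and the likeliest place to be careful) is justifying the improper integrals and ensuring strictness survives; both are handled by restricting to interior $p$ where $f_i>0$, $0<F_i(p)<1$, and by invoking the strict version of log-concavity from \Cref{def:log-concave} on an open subinterval.
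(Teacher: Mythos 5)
Your proof is correct, but it runs the paper's argument in the opposite logical direction and is in one respect more self-contained. The paper first cites the closure result that log-concavity of $f_i$ implies log-concavity of $F_i$ and $1-F_i$, concludes that $l_i,d_i,v_i$ are all derivatives of strictly concave functions and hence strictly decreasing, and only then reads off the ordering $l_i<d_i<v_i$ from the negativity of $v_i'$ and $l_i'$ via the identities $d_i - v_i = \tfrac{F_i}{f_i}v_i'$ and $l_i - d_i = \tfrac{1-F_i}{f_i}l_i'$. You instead take as primitive only the monotonicity of $d_i$ (immediate from \Cref{def:log-concave}), prove the ordering directly by integrating $d_i(x)>d_i(p)$ against $f_i(x)$ over the left tail (and symmetrically over the right tail), and then recover the strict monotonicity of $v_i$ and $l_i$ from the same identities run in reverse. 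Your integration step is in effect a proof of the cited closure result in the differentiable case, so it removes the external dependency, and it makes the strictness of all inequalities explicit, which the paper's appeal to ``strictly concave'' treats somewhat tersely. One small imprecision to note: writing $\int_{-\infty}^{p} f_i'(x)\,\mathrm{d}x = f_i(p)$ assumes the density vanishes at the lower edge of its domain; if the domain is a proper interval $[a,b]$ the integral equals $f_i(p)-f_i(a)$, but since $f_i(a)\ge 0$ this only strengthens the inequality you need (and likewise $f_i(b)\ge 0$ on the right tail), so the argument goes through unchanged.
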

\addtocounter{theorem}{-1}
\endgroup
\begin{proof}
Continuity of $F_i,f_i,f'_i$ implies that of $l_i,d_i,v_i$. It's known~\cite{concave} that log-concavity of $f_i$ implies log-concavity of both $F_i$ and $1-F_i$. As a result, $l_i$, $d_i$, and $v_i$ are derivatives of strictly concave functions; therefore, they are strictly decreasing. In particular, each of

\[v'_i(p) = \frac{f'_i(p)}{F_i(p)} - \frac{f_i(p)^2}{F_i(p)^2},\quad
l'_i(p) = \frac{-f'_i(p)}{1-F_i(p)} - \frac{f_i(p)^2}{(1-F_i(p))^2},\]

are negative for all $p$, so we conclude that

\begin{align*}
d_i(p) - v_i(p)
= \frac{f'_i(p)}{f_i(p)} - \frac{f_i(p)}{F_i(p)}
&= \frac{F_i(p)}{f_i(p)} v'_i(p)
< 0,
\\l_i(p) - d_i(p)
= -\frac{f'_i(p)}{f_i(p)} -\frac{f_i(p)}{1-F_i(p)}
&= \frac{1-F_i(p)}{f_i(p)} l'_i(p)
< 0.
\end{align*}

\end{proof}

\begingroup
\def\thetheorem{\ref{thm:uniq-max}}
\begin{theorem}
Suppose that for all $j$, $f_j$ is continuously differentiable and log-concave. Then the unique maximizer of $\Pr(P_i=p\mid E^L_i,E^W_i)$ is given by the unique zero of
\[Q_i(p) = \sum_{j \succ i} l_j(p) + \sum_{j \sim i} d_j(p) + \sum_{j \prec i} v_j(p).\]
\end{theorem}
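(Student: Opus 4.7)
The plan is to identify $Q_i$ as the derivative of the log posterior density, argue that this log posterior is strictly concave and coercive, and conclude that its unique critical point coincides with the unique maximizer and with the unique zero of $Q_i$.

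I would begin by applying Bayes' rule to write
\[\Pr(P_i = p \mid E^L_i, E^W_i) \propto f_i(p)\,\Pr(E^L_i, E^W_i \mid P_i = p).\]
By the conditional independence encoded in \Cref{eq:model}, given $P_i = p$ the other players' performances are mutually independent with marginal densities $f_j$, so the likelihood factors as a product of $1-F_j(p)$ over $j \succ i$, $F_j(p)$ over $j \prec i$, and a tie contribution over $j \sim i$. Following the footnote's bucket convention, a tie corresponds to $|P_j - p|<\epsilon$, contributing a factor $2\epsilon f_j(p) + o(\epsilon)$; since every $\epsilon$ factor is a constant in $p$, it is a harmless multiplicative normalizer that drops out of any $\arg\max$ and can be discarded in the limit $\epsilon\to 0$.

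Taking the logarithm, and absorbing the prior $\log f_i(p)$ into the tie sum (which trivially contains the index $j = i$), I obtain
\[\ell(p) := \sum_{j \succ i}\log(1-F_j(p)) + \sum_{j \sim i}\log f_j(p) + \sum_{j \prec i}\log F_j(p) + \mathrm{const}.\]
Differentiating term-by-term using the definitions of $l_j, d_j, v_j$ yields $\ell'(p) = Q_i(p)$ exactly.

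Next, I would establish strict concavity of $\ell$. By log-concavity of each $f_j$ and the classical closure property that $F_j$ and $1 - F_j$ inherit log-concavity from $f_j$ \cite{concave}, each of $\log f_j$, $\log F_j$, and $\log(1-F_j)$ is strictly concave, so their finite sum $\ell$ is strictly concave. Equivalently, $Q_i = \ell'$ is strictly decreasing, a fact also immediate from \Cref{lem:decrease}. For existence of a maximizer I would invoke coercivity: a log-concave probability density on $\mathbb{R}$ must satisfy $\log f_i(p) \to -\infty$ at both tails, and the remaining summands are bounded above by $0$, so $\ell(p) \to -\infty$ as $p \to \pm\infty$. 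Hence $\ell$ attains its supremum on $\mathbb{R}$, and strict concavity forces this maximizer to be unique and to coincide with the unique zero of $\ell' = Q_i$.

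The main obstacle I anticipate is the rigorous measure-zero handling of ties: one must verify that replacing each exact tie by the bucket event $|P_j - p|<\epsilon$ and then sending $\epsilon \to 0$ genuinely preserves the $\arg\max$ of the posterior. This reduces, after dividing out the total $\epsilon$-dependent normalizer, to the uniform-on-compacts convergence $\Pr(|P_j - p|<\epsilon)/(2\epsilon) \to f_j(p)$, which holds by continuity of $f_j$; since the $\epsilon$-dependent factors are independent of $p$, they cannot shift the location of the maximum, and the limiting objective is precisely the $\ell(p)$ analyzed above.
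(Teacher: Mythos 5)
Your proposal is correct, and its first half—the $\epsilon$-bucket limit for ties, the Bayes factorization into $\prod_{j\succ i}(1-F_j)\prod_{j\sim i}f_j\prod_{j\prec i}F_j$ with the prior $f_i$ absorbed into the tie product, and the identification $Q_i=\ell'$ with $Q_i$ strictly decreasing via \Cref{lem:decrease}—matches the paper's proof essentially line for line. Where you genuinely diverge is the existence step. The paper never argues about $\ell$ itself: it constructs explicit points $p^*$ and $q^*$ at which $Q_i$ takes opposite signs (choosing, for each tied $j$, a point where $f_j'<0$ to make the $d_j$ terms sum to some $-\alpha<0$, then pushing far enough right that each $v_j$ term is below $\alpha/n$), and invokes the intermediate value theorem. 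You instead argue coercivity of the log-posterior: $\log f_i(p)\to-\infty$ in both tails while the remaining summands stay bounded above (note they are bounded above by constants, not necessarily by $0$, since a density $f_j$ can exceed $1$—a harmless imprecision), so $\ell$ attains its supremum at an interior critical point. Your route is shorter and conceptually cleaner, trading the paper's somewhat fiddly pointwise construction for a standard compactness-plus-concavity argument; both arguments implicitly rely on the performance distributions having unbounded support (the paper's claims that $f_j'\ge 0$ everywhere is impossible and that $v_j(p)\to 0$ as $p\to\infty$ fail on a bounded domain, just as your tail claim for $\log f_i$ would), so neither is more general than the other. Your closing remark about uniform-on-compacts convergence of the bucketed likelihood is a point of rigor the paper glosses over, and is a welcome addition.
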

\addtocounter{theorem}{-1}
\endgroup

\begin{proof}
First, we rank the players by their buckets according to $\floor{P_j/\epsilon}$, and take the limiting probabilities as $\epsilon\rightarrow 0$:
\begin{align*}
    \Pr(\floor{\frac{P_j}\epsilon} > \floor{\frac{p}\epsilon})
    &= \Pr(p_j \ge \epsilon\floor{\frac{p}\epsilon} + \epsilon)
    \\&= 1 - F_j(\epsilon\floor{\frac{p}\epsilon} + \epsilon)
    \rightarrow 1 - F_j(p),
    \\\Pr(\floor{\frac{P_j}\epsilon} < \floor{\frac{p}\epsilon})
    &= \Pr(p_j < \epsilon\floor{\frac{p}\epsilon})
    \\&= F_j(\epsilon\floor{\frac{p}\epsilon})
    \rightarrow F_j(p),
    \\\frac 1\epsilon \Pr(\floor{\frac{P_j}\epsilon} = \floor{\frac{p}\epsilon})
    &= \frac 1\epsilon \Pr(\epsilon\floor{\frac{p}\epsilon} \le P_j < \epsilon\floor{\frac{p}\epsilon} + \epsilon)
    \\&= \frac 1\epsilon\left( F_j(\epsilon\floor{\frac{p}\epsilon} + \epsilon) - F_j(\epsilon\floor{\frac{p}\epsilon}) \right)
    \rightarrow f_j(p).
\end{align*}

Let $L_{jp}^\epsilon$, $W_{jp}^\epsilon$, and $D_{jp}^\epsilon$ be shorthand for the events $\floor{\frac{P_j}\epsilon} > \floor{\frac{p}\epsilon}$, $\floor{\frac{P_j}\epsilon} < \floor{\frac{p}\epsilon}$, and $\floor{\frac{P_j}\epsilon} = \floor{\frac{p}\epsilon}$. respectively. These correspond to a player who performs at $p$ losing, winning, and drawing against $j$, respectively, when outcomes are determined by $\epsilon$-buckets. Then,
\begin{align*}
\Pr(E^W_i,E^L_i\mid P_i=p)
&= \lim_{\epsilon\rightarrow 0}
\prod_{j \succ i} \Pr(L_{jp}^\epsilon)
\prod_{j \prec i} \Pr(W_{jp}^\epsilon)
\prod_{j \sim i, j\ne i} \frac{\Pr(D_{jp}^\epsilon)}\epsilon
\\&= \prod_{j \succ i} (1 - F_j(p)) \prod_{j \prec i} F_j(p) \prod_{j \sim i, j\ne i} f_j(p),
\\\Pr(P_i=p \mid E^L_i,E^W_i)
&\propto f_i(p) \Pr(E^L_i,E^W_i\mid P_i=p)
\\&= \prod_{j \succ i} (1 - F_j(p)) \prod_{j \prec i} F_j(p) \prod_{j \sim i} f_j(p),
\\\ddp\ln \Pr(P_i=p \mid E^L_i,& E^W_i) = \sum_{j \succ i} l_j(p) + \sum_{j \prec i} v_j(p) + \sum_{j \sim i} d_j(p) = Q_i(p).
\end{align*}

Since \Cref{lem:decrease} tells us that $Q_i$ is strictly decreasing, it only remains to show that it has a zero. If the zero exists, it must be unique and it will be the unique maximum of $\Pr(P_i=p \mid E^L_i,E^W_i)$.

To start, we want to prove the existence of $p^*$ such that $Q_i(p^*) < 0$. Note that it's not possible to have $f'_j(p) \ge 0$ for all $p$, as in that case the density would integrate to either zero or infinity. Thus, for each $j$ such that $j\sim i$, we can choose $p_j$ such that $f'_j(p_j) < 0$, and so $d_j(p_j) < 0$. Let $\alpha = -\sum_{j\sim i} d_j(p_j) > 0$.

Let $n = |\{j:\,j \prec i\}|$. For each $j$ such that $j \prec i$, since $\lim_{p\rightarrow\infty}v_j(p) = 0/1 = 0$, we can choose $p_j$ such that $v_j(p_j) < \alpha/n$. Let $p^* = \max_{j\preceq i} p_j$. Then,
\[
\sum_{j \succ i} l_j(p^*) \le 0, \quad \sum_{j \sim i} d_j(p^*) \le -\alpha, \quad \sum_{j \prec i} v_j(p^*) < \alpha.
\]

Therefore,
\begin{align*}
Q_i(p^*)
&= \sum_{j \succ i} l_j(p^*) + \sum_{j \sim i} d_j(p^*) + \sum_{j \prec i} v_j(p^*)
\\&< 0 - \alpha + \alpha = 0.
\end{align*}

By a symmetric argument, there also exists some $q^*$ for which $Q_i(q^*) > 0$. By the intermediate value theorem with $Q_i$ continuous, there exists $p\in (q^*,p^*)$ such that $Q_i(p) = 0$, as desired.
\looseness=-1
\end{proof}

\bibliographystyle{ACM-Reference-Format}
\bibliography{EloR}

\end{document}